%
%
%
%
%

%
\RequirePackage{fix-cm}
\documentclass[smallextended]{svjour3}       
\smartqed  
\usepackage{graphicx}
\usepackage{csquotes}
\usepackage[table]{xcolor}
\usepackage[authoryear]{natbib}
\usepackage{amsmath}
\usepackage{amssymb}
\usepackage{enumitem}
\usepackage{bbm}
%
%
%
%
%
\begin{document}

\title{Perfect Prediction in Normal Form: Superrational Thinking Extended to Non-Symmetric Games}
\titlerunning{Perfect Prediction in Normal Form}


\author{Ghislain Fourny}


\institute{G. Fourny \at
              ETH Z\"urich \\
              Department of Computer Science \\
              \email{ghislain.fourny@inf.ethz.ch}\\
}

\date{Originally December 15, 2017. Updated February 19, 2020.}

\maketitle

\begin{abstract}
This paper introduces a new solution concept for non-cooperative games in normal form with no ties and pure strategies: the Perfectly Transparent Equilibrium. The players are rational in all possible worlds and know each other's strategies in all possible worlds -- which, together, we refer to as Perfect Prediction. The anticipation of a player's decision by their opponents is counterfactually dependent on the decision, unlike in Nash Equilibria where the decisions are made independently. The equilibrium, when it exists, is unique and is Pareto optimal.

This equilibrium is the normal-form counterpart of the Perfect Prediction Equilibrium; the prediction happens ``in another room'' rather than in the past. The equilibrium can also be seen as a natural extension of Hofstadter's superrationality to non-symmetric games. Algorithmically, an iterated elimination of non-individually-rational strategy profiles is performed until at most one remains. An equilibrium is a strategy profile that is immune against knowledge of strategies in all possible worlds and rationality in all possible worlds, a stronger concept than common knowledge of rationality but also stronger than common counterfactual belief of rationality.

We formalize and contrast the Non-Nashian Decision Theory paradigm, common to this and several other papers, with Causal Decision Theory and Evidential Decision Theory. We define the Perfectly Transparent Equilibrium algorithmically and prove (when it exists) that it is unique, that it is Pareto-optimal, and that it coincides with Hofstadter's Superrationality on symmetric games. We relate the finding to concepts found in the literature such as Individual Rationality, Rationalizability, Minimax-Rationalizability, Second-Order Nash Equilibria, the Program Equilibrium, the Perfect Prediction Equilibrium, Shiffrin's Joint-Selfish-Rational Equilibrium, the Stalnaker-Bonanno Equilibrium, the Perfect Cooperation Equilibrium, the Translucent Equilibrium, the Correlated Equilibrium, and Quantum Games. Finally, we specifically discuss inclusion relationships on the special case of symmetric games between Individual Rationality, Minimax-Rationalizability, Superrationality, and the Perfectly Transparent Equilibrium, and contrast them with asymmetric games.

\keywords{Counterfactual dependence, Necessary Rationality, Necessary Knowledge of Strategies, Perfect Prediction, Transparency, Non-Cooperative Game Theory, Non-Nashian Game Theory, Strategic Games, Superrationality}
\end{abstract}

\section{Introduction}
\label{section-introduction}
\subsection{Superrational thinking}
\label{section-superrational-thinking}

On the planet Betazed, a member of the United Federation of Planets, people have telepathic powers \citep{Roddenberry1969}. The prisoner's dilemma is not one to Betazoids: they cooperate, as it is quite obvious to them. They can read each other's minds, which leads to strong dependencies between their decisions. The assumption underlying the Nash equilibrium \citep{JNNCG}, namely that the opponent's strategy is kept frozen and fixed while optimizing one's payoff, does not apply to them: indeed, any change of strategy leads to an instantaneous change of strategy of the opponent. Defect, and the opponent defects. Cooperate, and they cooperate as well.

While this scenario pertains to twenty-fourth-century science fiction, the last decade's progress in data science -- especially large scale data analysis and machine learning -- hints that the decisions of human beings, especially at large scales, can be predicted to some extent, and that extent increases every year. This observation shakes a fundamental axiom of neoclassical economics -- namely, that agents make decisions independently of each other. This axiom, which some refer to as free choice\footnote{We consider this mainstream approach to free choice a strong definition of free choice. There are other, weaker definitions of free choice that are compatible with being predictable, for example, that one ``could have acted otherwise''. This latter approach is taken in this paper.}, is also at the core of Nashian\footnote{Throughout this paper, we use the adjective ``Nashian'' to refer to the Nash paradigm. Usage of the term Non-Nashian should be seen as a tribute to the work by John Nash, similar to how we call Non-Euclidian Geometry as a reference to Euclid.} game theory and in particular of the Nash equilibrium.

\citet{Hofstadter1983} suggested an alternative line of reasoning, superrationality, which we introduce formally in Section \ref{section-superrationality}. His main idea is that the players have such a high level of awareness of their rationality -- and of their common knowledge thereof, and of their own reasonings -- that they are able to reason on a meta-level, taking into account that the opponent is reasoning in the exact same way. In Douglas Hofstadter's words:

\begin{displayquote}
``If reasoning dictates an answer, then everyone should independently come to that answer. Seeing this fact is itself the critical step in the reasoning toward the correct answer [...]. Once you realize this fact, then it dawns on you that \emph{either} all rational players will choose D \emph{or} all rational players will choose C. This is the crux.
 Any number of ideal rational thinkers faced with the same situation and undergoing similar throes of reasoning agony will necessarily come up with the identical answer eventually, so long as reasoning alone is the ultimate justification for their conclusion. Otherwise, reasoning would be subjective, not objective as arithmetics is. A conclusion reached by reasoning would be a matter of preference, not of necessity.''
\end{displayquote}

Douglas Hofstadter gave a concrete algorithm for superrational play that is specific to symmetric games in normal form. Superrational thinkers start with the assumption that there is only one rational strategy, and then use their logical skills to find it. Because the game is symmetric, the rational strategies must be identical on both sides, so that the final outcome will necessarily be on the diagonal. Knowing this, it is straightforward that the reasonable course of action is, for both players, to pick the strategy that leads to the optimal strategy profile on the diagonal (Figure \ref{fig-prisoner-dilemma-hofstadter}). This is the one rational strategy, confirming the initial assumption.

\begin{figure}
\begin{center}
\begin{tabular}{|r|c|c|}
\hline
& Defect & Cooperate\\
\hline
Defect & 1, 1 & 3, 0\\
\hline
Cooperate &  0, 3 & \cellcolor{black!25}2, 2\\
\hline
\end{tabular}
\end{center}
\caption{The prisoner's dilemma. Superrational players either both cooperate or both deviate. In a Hofstadter equilibrium, players both cooperate.}
\label{fig-prisoner-dilemma-hofstadter}
\end{figure}

It is crucial to understand that the interdependence of the decisions that leads to both players cooperating in the prisoner's dilemma is not built on any supernatural telepathic powers. It is solely based on the simultaneous use of the same mathematical and logical laws on the same shared set of assumptions.

\subsection{Going beyond symmetric games}

The goal of this paper is to extend superrational reasoning to all games in normal form, with the only restriction that there are no ties in the payoffs, that is, players are never indifferent between any two outcomes\footnote{Also called general positions in the game theory literature.}. In practice, games with ties can be turned into games without ties by adding a small noise to the payoffs.

The generalized equilibrium is called the Perfectly Transparent Equilibrium (PTE). We show that, while it does not always exist, when it does exist, it is always unique, is always Pareto-optimal, and coincides with Hofstadter's equilibrium on symmetric games.

Two concepts are introduced in the reasoning underlying the PTE: 1. Necessary Rationality, i.e., rationality in all possible worlds, and 2. Necessary Knowledge of Strategies, i.e., the agents correctly predict each other in all possible worlds. The Nashian hypothesis, which states that deviations of strategies are unilateral, however, is dropped.

Necessary Rationality and Necessary Knowledge of Strategies thus provide a more generic and fine-grained epistemic support for superrationality (Section \ref{section-superrationality}) than its original characterization by Hofstadter as a direct maximization of payoffs on the diagonal. These concepts were first described in philosophical papers \citep{Dupuy1992} \citep{Dupuy2000}\footnote{Dupuy uses the term ``essential prediction.''} and applied to games in extensive form with perfect information by \cite{Fourny2018}. However, we assume that the reader did not read these papers and explain these concepts in detail in this paper. 

We finish the paper with examples, counter-examples, a review of non-Nashian literature as well as a few inclusion theorems on the specific case of symmetric games that connect the PTE (Necessary Rationality), Hofstadter's equilibrium (Superrationality) \citep{Hofstadter1983}, Halpern's and Pass's minimax rationalizability (Common Counterfactual Belief of Rationality) \citep{Halpern:2013aa}, and individual rationality as found in the commonly known Folk theorems.

\section{Background: Nashian game theory}
\label{section-normal-form}

Before we introduce our paradigm and the equilibrium, we start with some mainstream background in game theory. We start with the definitions of games in normal form as well as the Nash equilibrium, which is based on unilateral deviations. This section can be safely skipped by readers familiar with game theory (normal form, Nash, individual rationality).

\subsection{Games in normal form}

In game theory, games are typically expressed in two forms: normal and extensive. In extensive form, the game is expressed as a tree. At each node, a player picks a child node, and the leaves describe possible outcomes and are labeled with payoffs. Such games are discussed by \cite{Fourny2018} and are thus omitted in this paper.

On the other hand, in normal form, the focus of this paper, time plays no role and the payoffs are organized in a matrix (or tensor if there are more than two players). Figure \ref{fig-normal-form} shows a two-player game in normal form. One player plays on the row, the other plays on the columns.

\begin{figure}
\begin{center}
\begin{tabular}{|r|c|c|}
\hline
& C & D\\
\hline
A & $u_1(A, C), u_2(A, C)$ & $u_1(A, D), u_2(A, D)$\\
\hline
B & $u_1(B, C), u_2(B, C)$ & $u_1(B, D), u_2(B, D)$\\
\hline
\end{tabular}
\end{center}
\caption{A game in normal form, with two players that each can pick two strategies (A and B for the row player, C and D for the column player)}
\label{fig-normal-form}
\end{figure}

\begin{definition}[Game in normal form]

A game in normal form is defined with:

\begin{itemize}
\item a finite set of players $P$.
\item a set of strategies $\Sigma_i$ for each player $i\in P$.
\item a specification of payoffs $u_i(\overrightarrow\sigma)$ for each player $i\in P$ and strategy profile $\overrightarrow\sigma=(\sigma_j)_{j\in P}$, where $\forall i \in P, \sigma_i \in \Sigma_i$.
\end{itemize}

\end{definition}

\begin{figure}
\begin{center}
\begin{tabular}{|r|c|c|}
\hline
& Defect & Cooperate\\
\hline
Defect & \cellcolor{black!75}\textcolor{white}{1, 1} & 3, 0\\
\hline
Cooperate & 0, 3 & \cellcolor{black!25} 2, 2\\
\hline
\end{tabular}
\end{center}
\caption{The prisoner's dilemma, the most known game in normal form. A player can either cooperate or defect. If both cooperate, they get more than if both defect. However, a player who unilaterally defects will get even more payoff than with mutual cooperation. In the Nash equilibrium, both players defect as these strategies are each other's best responses. The individually rational outcomes are all those that weakly Pareto-dominate the maximin tuple (1,1), that is, those on the diagonal.}
\label{fig-prisoner-dilemma}
\end{figure}

\begin{figure}
\begin{center}

\begin{tabular}{|r|c|c|c|}
\hline
& A & B & C \\
\hline
D &  \cellcolor{black!25} 6, 8& 1, 2 & 4, 4\\
\hline
E &3, 1& 0, 7&  2, 3\\
\hline
F & 7, 0&  \cellcolor{black!25} 8, 5&  \cellcolor{black!75}\textcolor{white}{5, 6}\\
\hline
\end{tabular}
\end{center}
\caption{A larger game, asymmetric and with general positions. The Nash equilibrium is CF, i.e. (5, 6). Indeed, C is the best response to F ($6>0$ and $6>5$) and F is the best response to C ($5>4$ and $5>2$). Other individually rational outcomes in addition to CF are AD (6,8) and BF (8,5) because, with any other outcome, the row player would deviate to F to secure a minimum payoff of 5, or the column player would deviate to C to secure a minimum payoff of 3.}
\label{figure-asymmetric-game}
\end{figure}

The payoff space only needs to be totally ordered (preference relation). In particular, when numbers are used, they are only meant as ordinals. Comparing 1 to 1000 is no different from comparing 1 to 2. This is why in all our examples we use an increasing sequence of small integers.

Very often, given a player $i$, we denote as $\Sigma_{-i}$ the cartesian product of the remaining strategy spaces, and given a strategy profile $\overrightarrow\sigma$, we denote as $\sigma_{-i}$ the projection of the profile on $\Sigma_{-i}$. This conveniently allows writing $u_i(\overrightarrow\sigma)$ as $u_i(\sigma_i, \sigma_{-i})$, a slight abuse of notation that is always clear from the context.

We will only consider pure strategies, meaning that players may not use randomness to build mixed strategies. The outcome of a game must thus always be one of the strategy profiles of the normal form matrix, with each player getting the corresponding payoff. Furthermore, we assume that there are no ties, meaning that a player always has a strict preference between any two strategy profiles. The players do not cooperate. They act selfishly but accept and use the laws of logic. They are also rational in the sense that they make decisions to optimize their utility to the best of their knowledge or beliefs\footnote{Our paradigm differs from Nash in terms of knowledge or beliefs regarding the underlying assumptions. People can act rationally in different ways if they have different beliefs.}. They commonly know the structure of the game. Finally, the game is only played once: this is not a repeated-game equilibrium.

\subsection{Nash equilibria}

Nash equilibria are defined having in mind that players hold their opponent's choices of strategies as fixed. Restating that in subjunctive tense\footnote{As we will see in Section \ref{section-newcomb}, this expresses a counterfactual dependence.}, if a player \emph{had picked} a different strategy, the other players' strategies \emph{would still have been} the same. Payoffs are thus compared across rows or columns. With this Nashian mindset, the definition of a Nash equilibrium naturally arises as a strategy profile $\overrightarrow\sigma$ for which, for each player, the picked strategy $\sigma_i$ is the best response to the other players' strategies $\sigma_{-i}$. Formally:

\begin{definition}[Nash equilibrium] Given a game $(P, \Sigma, u)$, a strategy profile $\overrightarrow\sigma$ is a Nash equilibrium if, for any player $i\in P$:

$$\forall \tau_i \in \Sigma_i, u_i(\sigma_i, \sigma_{-i}) \ge u_i(\tau_i, \sigma_{-i})$$
\end{definition}

Figures \ref{fig-prisoner-dilemma} and \ref{figure-asymmetric-game} show the Nash equilibria for two games. In the context of this paper, the most crucial part to understand is that only unilateral deviations are considered. This leads in particular to dominant strategies, such as defecting in the prisoner's dilemma, as 1 is compared to 0 (in the same row or column) and 3 is compared to 2 (in the same row or column).

\subsection{Individual rationality}
\label{section-individual-rationality}

A broader (meaning less restrictive) concept than the Nash equilibrium is that of individual rationality. Individual rationality has been known for a few decades in the context of a few theorems on repeated games commonly known as Folk theorems\footnote{As a consequence, this solution concept cannot be attributed to a specific person.}.

It is easier to define individual rationality negatively. Let us assume that the strategies picked by all agents are known to all of them, and thus the complete strategy profile (outcome) reached by the game is known as well to all agents. If, for one of the players (say the row player), her payoff with this strategy profile is worse for her than \emph{all} of the payoffs on a different row, then this outcome is said not to be individually rational. Indeed, this player could have selected the strategy corresponding to this other row and would then have secured, with a 100\% guarantee, a better payoff, so her initial choice was irrational. A positive (equivalent, but more formal) definition of individual rationality is that a strategy profile is individually rational if it Pareto-dominates a virtual strategy profile made of all ``best worst payoffs'', that is, each player gets at least what they have the power to guarantee themselves by picking the strategy with the highest worst payoff, regardless of what the opponents do. Formally:

\begin{definition}[individually rational strategy profile] a strategy profile $\overrightarrow\sigma$ is individually rational if

$$\forall i \in P, u_i(\overrightarrow\sigma) \ge \max_{\tau_i\in\Sigma_i} \min_{\tau_{-i}\in\Sigma_{-i}} u_i(\overrightarrow\tau)$$

\end{definition}

A Nash equilibrium is always individually rational. This follows from each agent's strategy being optimal given the other agent's strategies. In the prisoner's dilemma (Figure \ref{fig-prisoner-dilemma}), the individually rational strategy profiles are exactly those on the diagonal. Indeed, each player, by defecting, has a guaranteed worst payoff of 1, so that any strategy profile with a 0 payoff is not individually rational. Figure \ref{figure-asymmetric-game} shows the individually rational strategies for the asymmetric game. The maximin of the row player is 5 because strategy F guarantees him a payoff of 5 no matter what. The maximin of the column player is 3 because strategy C guarantees her a payoff of 3 no matter what. Thus, any strategy profile that has at least a player's payoff strictly smaller than their maximin is not individually rational, as this player would otherwise deviate to C, respectively to F. Three outcomes are individually rational: AD, BF, and CF.


\section{Background: Newcomb's problem and the non-Nashian approach}
\label{section-newcomb}

We now introduce Newcomb's problem, based on which we discuss Causal Decision Theory (CDT), Evidential Decision Theory (EDT), and a proposed ``third way out'', which we call Non-Nashian Decision Theory (NNDT)\footnote{The philosophical seed of this alternative decision theory and a deeper discussion of rational reasoning with Newcomb's problem is found in \citep{Dupuy1992}. Dupuy suggested the terminology ``Counterfactual Decision Theory'' in his seeding paper, however we prefer to call it Non-Nashian Decision Theory to avoid a collision in the abbreviations, but also to clarify that counterfactuals also appear in CDT, even though they are constrained there by independence assumptions.}. We also give a short introduction to the basics of possible worlds semantics.

Readers familiar with Newcomb's problem, CDT, EDT, possible worlds (Kripke), counterfactual dependencies and Dupuy's seeding work may directly skip to Sections \ref{section-nnpt} and \ref{section-pte} where we introduce our new solution concept. Readers who only need to catch up with counterfactual thinking can skip to \ref{section-nndt}, where we introduce NNDT, knowing that our notation is documented in Section \ref{section-event-notations}.

\subsection{Newcomb's problem}

Game theory involves anticipating the decisions of players endowed with free choice. The apparent conflict between the two concepts is embodied in Newcomb's problem\footnote{Newcomb's problem is also called Newcomb's paradox, however, as explained below, we view it less a paradox than as a thought experiment to illustrate the importance of making counterfactual dependencies explicit.}, which is as follows. An agent is facing two boxes, one is opaque, and one is transparent. There are \$1,000 in the transparent box. The agent knows the opaque box is either empty or contains \$1,000,000. The agent may either pick the opaque box or both boxes. The catch is that, previously, an entity predicted what the agent would do: if she predicted the agent would pick one box, she put \$1,000,000 inside that opaque box. If she predicted the agent would pick two boxes, she put nothing inside. It is further known that the predictor has done this thousands of times, and all her predictions were correct hitherto.

This problem is often referred to as a paradox because two lines of reasoning seem equally reasonable: a Nashian, dominant strategy argument comparing \$x to \$x+1,000 leads to two boxes being optimal no matter what the value of $x$ is. But another reasoning taking for granted that the prediction is always correct compares \$1,000 to \$1,000,000 and leads to one box being optimal.

\subsection{Several decision theories}

Newcomb's problem is often explained in light of Causal Decision Theory (CDT) and Evidential Decision Theory (EDT), which maximize payoffs (utility or value) in different ways \citep{Weirich2016}. \citet{Gibbard1978} showed that the difference comes down to distinguishing between probabilities of subjunctive conditionals and conditional probabilities. CDT, built on probabilities of subjunctive conditionals, supports two-boxers, while EDT, built on conditional probabilities, supports one-boxers.

The framework on which this paper is based is a third way out (Non-Nashian Decision Theory, NNDT) that is neither CDT nor EDT and which supports one-boxers. In order to formally differentiate it from CDT and EDT, we first introduce CDT and EDT on the example of the Newcomb problem.

\subsection{Possible worlds}
\label{section-possible-worlds}

Before we discuss the three decision theories, we say a few words about possible worlds, which all approaches have in common underneath. Possible worlds date back to as early as Gottfried \citet{Leibniz1710}, also with the notion of maximization of a quantity over the set of possible worlds. \cite{Kripke1963} later designed a formal epistemic framework to model epistemic logic statements. There are a lot of variants of such semantics, but we give here a short ``Possible worlds 101'' that will help a larger audience understand counterfactuals as well as the differences between the three decision theories.

We have a set $\Omega$ of possible worlds, typically denoted $w\in\Omega$. Agents make choices, and different choices are made in different worlds. Formally, in each world, for each agent who has to make a choice, a specific decision is made. Regarding Newcomb's problem, in some worlds, the agent picks one box; in some others, the agent picks two boxes. Likewise, in some worlds, the prediction is ``one box,'' and in some other worlds, the prediction is ``two boxes''. In each world, the choice and the prediction are unambiguously defined. Events are subsets of $\Omega$. For example, ``the agent picks one box'' is an event that corresponds to the subset of all worlds in which the agent picks one box. It can, equivalently, be seen as a truth assignment (true or false) on $\Omega$.

Knowledge, and often actually, incomplete knowledge, is formalized with associating each world and agent to some subset of $\Omega$. This relation is called the accessibility relation. Saying that for some world $w$ and agent $i$, a set $A$ of worlds is accessible, models the fact that agent $i$ in world $w$ knows that they are in some world in $A$, but do not know which one exactly (so they may not know that this is, actually, $w$). Typically, $w\in A$. If we allow for cases where $w\notin A$, we prefer the word ``belief'' (doxastic) to ``knowledge'' (epistemic) because the agent may have incorrect beliefs. Very often, but not always, the accessibility relation is simply expressed as a partition of $\Omega$, which is a special case\footnote{when the relation is transitive, symmetric, and reflexive.}. The said agent knows or believes in an event $B$ if $A\subseteq B$ (modeling knowledge of events in this way is probably the most brilliant insight of Kripke).

The necessity of an event $A$ means that $A=\Omega$. This is why necessary rationality is the same as rationality in all possible worlds, etc. Perfect prediction, in this paper, is simply defined as the fact that accessible worlds from $w$ for any agents are the singleton $\{w\}$, and a full formal account is given in a technical report \citep{Fourny2018b}. In a Bayesian approach, given some actual world, a probability measure is introduced on accessible worlds (say, $A$). These probabilities model the relative (un)certainty of the agent in their knowledge. Events can then be reinterpreted as random variables, and we can also define conditional probabilities and compute correlations, which is always symmetric. This is what EDT does. EDT limits itself to worlds that are accessible from the actual world and does not look beyond those.

In an approach with counterfactuals, a different relation on events, which we denote here $>$, is introduced instead of correlations, and that relation can be asymmetric. Given two events $A$ and $B$, $A> B$ is a subjunctive conditional that says that ``if A were true, then B would be true." The meaning of counterfactuals has been discussed by \citet{Stalnaker1968} and formalized by \cite{Lewis1973} in terms of possible worlds. Lewis suggested organizing alternative possible worlds (taken from $\Omega$) around the actual world, with a notion of distance. Then, the counterfactual statement, or subjunctive conditional,
$$A > B$$
is true in some world $w$ if, in the closest world to $w$ in which A is true, denoted $f(w, A)$, B is also true. $A>B$ is thus, formally, also an event, like its operands A or B. This means that counterfactuals nest recursively, and we can also write $f(w, A>B)$, $f(f(w, A), B)$, etc.

The term ``counterfactual'' comes from the fact that $f(w, A)$ may (but need not) be an inaccessible world (contrary to the facts known in $w$). The approach with counterfactuals is taken both by CDT and NNDT, and the difference lies in whether the past is (CDT), or does not have to be (NNDT), counterfactually independent of an agent's decision. Within CDT and NNDT, we can also define probabilities on subjunctive conditionals, i.e., they are random variables, too, but it is crucial to see that their nature is fundamentally different from (always symmetric) conditional probabilities. A very simple counterexample is that subjunctive conditionals do not generally follow the total law of probability\footnote{This is well known to physicists specialized in quantum foundations (contextuality). This law is replaced with the Born rule.}.

With this in mind, let us now interpret Newcomb's problem and show that, in all three cases, the agent is maximizing their utility.

\subsection{Event notations for Newcomb's problem}
\label{section-event-notations}

In the following, we consider the following events, subject to some underlying probability distribution:  $\blacksquare$ is the event that the opaque box has a million dollars inside. $\square$ is the event that the opaque box is empty. $ONE$ is the event that one picks one box. $TWO$ is the event that one picks two boxes.

Furthermore, we have a random variable P that is the choosing agent's utility (in dollars). This variable has an expected value if the agent picks one box, $\mathbb{E}_{ONE}[P]$, and an expected value if the agent picks two boxes, $\mathbb{E}_{TWO}[P]$. All decision theory frameworks have in common that utility is maximized. They differ in the definition of $\mathbb{E}_{ONE}[P]$ (the expected utility if one picks one box) and $\mathbb{E}_{TWO}[P]$ (the expected utility if one picks two boxes) out of the four possibilities:

$$\mathbb{E}_{TWO,\blacksquare}[P] = 1,001,000$$
$$\mathbb{E}_{ONE,\blacksquare}[P] = 1,000,000,$$
$$\mathbb{E}_{TWO,\square}[P] = 1,000,$$
$$\mathbb{E}_{ONE,\square}[P] = 0$$

\subsection{Causal Decision Theory}

In CDT, the impact of the decision on the environment, and in turn on the payoffs, is modeled with probabilities of subjunctive conditionals. $ONE > \blacksquare$ is the event that ``If I picked one box, then the opaque box would have a million dollars." $ONE > \square$ is the event that ``If I picked one box, then the opaque box would be empty." In CDT, expected utility is computed like so:

$$\mathbb{E}^{CDT}_{ONE}[P] = \mathbb{E}_{ONE,\blacksquare}[P] * P(ONE > \blacksquare ) + \mathbb{E}_{ONE,\square}[P] * P(ONE> \square)$$
and
$$\mathbb{E}^{CDT}_{TWO}[P] = \mathbb{E}_{TWO,\blacksquare}[P] * P(TWO > \blacksquare ) + \mathbb{E}_{TWO,\square}[P] * P(TWO> \square)$$

There is another important assumption commonly made in Causal Decision Theory: namely, that if an agent makes a free decision, it is counterfactually independent of anything not in its future. Since the decision cannot cause its anticipation, it follows, according to Causal Decision Theorists, that the anticipation is counterfactually independent of the decision. \citet{Stalnaker1972} thus argues that the probability of the subjunctive conditional ``were one to take one box, the prediction would have been one-box'' should be the same as the probability that the prediction is one-box.

In other words, since the decision in Newcomb's problem is in the future of when the box is filled,
$$P(ONE > \blacksquare)=P(\blacksquare)$$
and
$$P(ONE > \square)=P(\square)$$

In other words, the decision has no impact on the content of the opaque box. As a consequence, the expectations are calculated like so:

\begin{align*}
\mathbb{E}^{CDT}_{ONE}[P] &= 1000000 * P(ONE > \blacksquare ) + 0 * P(ONE> \square)\\
& = 1000000 * P(\blacksquare )+ 0 * P(\square)\\
& = 1000000 * P(\blacksquare )
\end{align*}

\begin{align*}
\mathbb{E}^{CDT}_{TWO}[P] &= 1001000 * P(TWO > \blacksquare ) + 1000 * P(TWO > \square)\\
&= 1001000 * P(\blacksquare ) + 1000* P(\square)\\
&= 1000000 * P(\blacksquare ) + 1000* (P(\blacksquare )+P(\square))\\
&= \mathbb{E}^{CDT}_{ONE}[P]  + 1000\\
&> \mathbb{E}^{CDT}_{ONE}[P]
\end{align*}

which leads to the decision of picking two boxes, which has the higher utility.

\subsection{Evidential Decision Theory}

Evidential decision theorists, on the other hand, compute the expected utility with conditional probabilities involving the events $ONE$, $TWO$, $\blacksquare$ and $\square$. Conditional probabilities behave differently than subjunctive conditionals. 

In the EDT framework, the expected utilities for the two possible choices are calculated with:

$$\mathbb{E}^{EDT}_{ONE}[P] = \mathbb{E}_{ONE,\blacksquare}[P] * P(\blacksquare | ONE) +  \mathbb{E}_{ONE,\square}[P] * P(\square | ONE) )$$
and
$$\mathbb{E}^{EDT}_{ONE}[P] = \mathbb{E}_{TWO,\blacksquare}[P] * P(\blacksquare | TWO) +  \mathbb{E}_{TWO,\square}[P] * P(\square | TWO) )$$

where it is assumed from the problem formulation that

$$P(\blacksquare | ONE)=P(\square | TWO) = 1$$
and
$$P(\blacksquare | TWO)=P(\square | ONE) = 0$$

As a consequence:

\begin{align*}
\mathbb{E}^{EDT}_{TWO}[P] &= 1001000 * P(\blacksquare | TWO) + 1000 * P(\square | TWO) )\\
&= 1001000 * 0+ 1000 * 1\\
&= 1000
\end{align*}

\begin{align*}
\mathbb{E}^{EDT}_{ONE}[P] &= 1000000 * P(\blacksquare |ONE) + 0 * P(\square | ONE) )\\
& = 1000000 * 1 + 0 * 0\\
&= 1000000 > \mathbb{E}^{EDT}_{TWO}[P]
\end{align*}

and thus it is rational to pick one box.

%
%
%
%
%
%
%
%

\subsection{A third way out: Non-Nashian Decision Theory}
\label{section-nndt}

The discussion between CDT and EDT is the subject of intense debate. But we argue here that there is more to this.

Causal Decision Theory is not only characterized by the use of probability of subjunctive conditionals. A paramount assumption in CDT is also the strong, Nashian free-choice assumption that the prediction of a decision is counterfactually independent of this decision. \citet{Dupuy1992} argues that it is crucial to distinguish causal from counterfactual independence. Indeed, the absence of causation does not, in general, and also according to the laws of physics, imply counterfactual independence. Measuring entangled particles in quantum systems illustrates this, as it can be expressed by subjunctive conditionals: if the measurement on the one side had been different, then the measurement on the other side would also have been counterfactually different, even though no causal effect can apply.

Using the probability of the subjunctive conditionals as in CDT, but replacing the free-choice assumption with a counterfactual dependence between the decision and its prediction yields an alternative theory of rational choice. We call this third theory Non-Nashian Decision Theory (NNDT). In NNDT, we do use, as in CDT, subjunctive conditionals, but if we drop the free-choice assumption, and assume instead that the prediction is counterfactually dependent on the decision, the following subjunctive conditionals are always true:

``if one had picked one box, the prediction would have been one-box''

``if one had picked two boxes, the prediction would have been two-boxes''

so that we have

$$P(ONE > \blacksquare)=P(TWO>\square)=1$$

and

$$P(TWO > \blacksquare)=P(ONE>\square)=0$$

And as a consequence:

\begin{align*}
\mathbb{E}^{NNDT}_{TWO}[P] &= 1001000 * P(TWO > \blacksquare) + 1000 * P(TWO > \square) )\\
&= 1001000 * 0+ 1000 * 1\\
&= 1000
\end{align*}

\begin{align*}
\mathbb{E}^{NNDT}_{ONE}[P] &= 1000000 * P(ONE > \blacksquare) + 0 * P(ONE > \square) )\\
& = 1000000 * 1 + 0 * 0\\
&= 1000000 > \mathbb{E}^{NNDT}_{TWO}[P]
\end{align*}

and the expected utility of picking one box outweighs that of picking two. 

Although the calculation, in the very case of the Newcomb problem, is similar and yields the same result than EDT, it is crucial to understand that the underlying reasoning is very different. Picking one box, in the NNDT mindset, is not merely evidence that the prediction was ``one box''. Picking one box has a counterfactual impact on the prediction, something formally modeled with Lewisian counterfactual functions. To understand this, it helps to consider that the agent may not be able to distinguish whether they are actually making their decision, or whether they are being simulated by the predictor.

As explained in \citet{Fourny2018}, Newcomb's problem is underspecified in the sense that this Lewisian counterfactual function corresponding to changing one's choice is not constrained by the problem formulation. In a CDT mindset, the counterfactual function expresses that the prediction would have been the same if one had decided otherwise. In Nashian game theory, this translates to unilateral deviations of strategies. In the NNDT mindset, the counterfactual function expresses that the prediction \emph{would also have been correct} if one had decided otherwise. This is a weaker form of free choice in which the agent \emph{could have acted otherwise}, but if they had, then the prediction would have been different. In the EDT mindset, there are no counterfactual functions at all, only conditional probabilities. It is important to understand that NNDT is distinct from EDT

\citet{Gibbard1978}, \citet{Lewis1979} and \citet{Dupuy1992} argued that there is a direct analogy between Newcomb's problem and the prisoner's dilemma: in both problems, the question is whether the decisions are counterfactually interdependent (NNDT), counterfactually independent (CDT), or statistically correlated (EDT).

\section{Necessary Rationality and Necessary Knowledge of Strategies in game theory}
\label{section-nnpt}

Now, let us come back to game theory and illustrate our new solution concept on the Prisoner's dilemma. The general formalism will then be introduced in Section \ref{section-pte}.

\subsection{CDT and Nash equilibria}

CDT and NNDT both rely on subjunctive counterfactuals modeling the impact of decisions on utilities but fundamentally differ, namely: whether or not a decision is counterfactually independent of anything it could not have caused\footnote{The latter part of this formulation of the strong version of free choice is attributable to theoretical physicists \citep{Renner2011}, even though they refer in this paper to conditional probabilities and the absence of correlations.}.

The Nash paradigm directly follows the CDT paradigm. Let us take the example of games in normal form, in which players may be, for example, in separate rooms to pick their strategies. If an agent's decision is counterfactually independent of anything it could not have caused, then this means that if some agent Mary picks strategy $\sigma_i$ and the other agents picked strategy $\sigma_{-i}$, the following counterfactual statement holds: ``Had Mary picked another strategy $\tau_i$, the other agents would still have picked strategies $\sigma_{-i}$.'' In Lewisian terms: in the closest world in which Mary picks $\tau_i$, the other agents jointly pick $\sigma_{-i}$ like in the actual world. Mary's decision has no impact on the other agents' strategies.

This counterfactual statement is what is known as unilateral deviations. It is thus rational for Mary to pick $\sigma_i$ over $\tau_i$ if\footnote{The inequality is always strict because of the assumption of general positions.}

$$\forall \tau_i \neq \sigma_i, u_i(\sigma_i, \sigma_{-i}) > u_i(\tau_i, \sigma_{-i})$$

In Lewisian semantics: Mary is rational because she gets a higher utility in the actual world than in the closest world in which she picks $\tau_i$, for any $\tau_i\neq \sigma_i$. She \emph{would have} gotten less utility if she \emph{had} picked another strategy. This is why the Nash equilibrium is computed by finding best responses to the opponents' strategies, by fixing a row or column and maximizing utility over it. The Nash equilibrium is reached under this strong free choice assumption and assuming Common Knowledge of Rationality in the actual world.

\subsection{Perfect Prediction and non-Nashian thinking: Necessary Knowledge of Strategies and Necessary Rationality}

Hofstadter was the first, to our knowledge, to drop the strong free choice assumption in a game-theoretical context. The Hofstadter equilibrium \citep{Hofstadter1983} is defined for symmetric games (see a formal definition in Section \ref{section-symmetric-games}) in which all agents have the same choice of strategies and the same payoff configurations, and limits the possible outcomes of the game to the diagonal (see a formal definition in Section \ref{section-superrationality}). While many refer to Hofstadter's idea (superrationality) in terms of EDT, we argue here that it is also meaningful, if not more meaningful, to formulate it in subjunctive conditionals -- within our third decision theory, NNDT. It is doing so that provides a way to generalize the reasoning to asymmetric games.

Formally, this is, again, supported by a counterfactual statement: ``If Mary had picked another strategy $\tau$, then all other agents would have picked that same strategy $\tau$.'' It can immediately be seen that this counterfactual statement is in direct contradiction with the CDT statement made in the former section: we are in the NNDT realm, in which deviations need not be unilateral. It is rational to cooperate for the prisoner's dilemma, because, had one defected, the opponent would have defected too.

Thus, under NNDT, utility is maximized, but on the diagonal, based on the counterfactual impact of a deviation of strategy on the other agents' strategies:

$$u_i(C, C)>u_i(D, D)$$

This is to be contrasted to the comparison that is done in CDT, leading to defection:

$$u_i(C, C)<u_i(D, C)$$

Now, let us dig into the counterfactual reasoning. First, we are going to drop the CDT free choice assumption to obtain the NNDT paradigm. Second, more specifically, we are going to replace this assumption with Perfect Prediction assumptions. This is one way to instantiate the NNDT paradigm -- there are other ways with weaker assumptions than Perfect Prediction (see \citet{Halpern:2013aa}, for example).

Perfect Prediction relies on two fundamental principles, which are embodied in Hofstadter's original statement in Section \ref{section-introduction}, and were precisely laid out by \cite{Dupuy1992} and \cite{Dupuy2000} as projected time:

\begin{itemize}
\item Principle 1: Necessary Knowledge of Strategies. The agents know each other's strategies in all possible worlds. Let us express this in counterfactual terms. All agents know, in advance, the outcome (strategy profile) that the game will reach, say $\sigma$. If the outcome of the game had been different, say $\tau$ -- that is, if an agent had acted otherwise -- then all agents \emph{would have known} that it would have been $\tau$.
\item Principle 2: Necessary Rationality. All agents are rational in all possible worlds. In counterfactual terms, if a player had picked a different strategy, all agents \emph{would still have} acted rationally. Necessary Rationality is a stronger assumption than Common Knowledge of Rationality: if the agents are rational in all possible worlds, then it follows in the actual world that the agents know, and commonly know, that they are rational\footnote{It is a very short proof that we can easily summarize here: An event -- or logical predicate -- can be canonically identified with the set of the possible worlds in which this event happens. In \cite{Kripke1963} semantics, knowledge of an event is defined as the fact that the set of all epistemically accessible worlds is included in this event. Necessary rationality means that the event that the agents are rational is the set of all possible worlds -- the inclusion thus follows trivially, and common knowledge follows (i) by repeating this argument in every possible world, which leads to necessary knowledge of rationality, and then (ii) by applying the entire argument recursively on the event of (knowledge of)$^n$ rationality for any $n$.}.
\end{itemize}

The decision making and prediction of the actual outcome rely solely on mathematical and logical computations and are a consequence of the at-most uniqueness of the outcome that can possibly be reached under this belief. Actually, the reasoning is based on \emph{reductio ad absurdum}, showing that alternative candidate worlds are, in fact, ``impossible possible'' worlds\footnote{We use this terminology, having in mind the work of \citet{Rantala1982} and \citet{Kripke1965}.}.

The formalization of a game-theoretical equilibrium in extensive form -- when agents play one after the other -- has already been published by \citet{Fourny2018} as the Perfect Prediction Equilibrium (PPE) under these same assumptions of Necessary Knowledge of Strategies and Necessary Rationality. The reasoning behind the PPE is closely related to the elapse of time: saying that the solution must be immune to its prediction amounts to say that the solution must be \emph{caused} by its prediction, like a self-fulfilling prophecy.

In this paper, we are looking at games in normal form. We call the equilibrium the Perfectly Transparent Equilibrium (PTE).
\subsection{A reformulation of the Prisoner's dilemma reasoning}
\label{section-prisoner-dilemma-pte}

Let us now revisit the prisoner's dilemma, shown in Figure \ref{fig-prisoner-dilemma} with the above principles of Necessary Knowledge of Strategies and Necessary Rationality in mind, and now departing from any diagonal argument nor using the knowledge that the game is symmetric.

We can start the reasoning, under the assumptions of Necessary Rationality and Necessary Knowledge of Strategies, by looking, for each possible strategy that a player can pick, for the worst payoff he can get. The row player gets at least zero by cooperating, and at least one by defecting. The same applies to the column player. We can thus see that defecting guarantees to both players a payoff of at least one: this is the maximin of both players.

\subsubsection{First round of elimination}

All outcomes that are not individually rational can be shown to be incompatible with Necessary Rationality and Necessary Knowledge of Strategies. They are even incompatible with the weaker assumptions of rationality (in the actual world) and knowledge of strategies (correct prediction in the actual world).

Let us show, for example, that outcome (0,3), reached when the row player cooperates and the column player defects (CD), is inconsistent with these assumptions. Let us assume the row player chooses to cooperate and knows that the outcome of the game will be CD (0,3). This means he gets a payoff of 0. Is he rational? By definition, he is if he gets a better utility with his actual choice than the utility he would have counterfactually obtained if his choice had been different. So what if he had defected? We need to look at the ``counterfactual payoff'' that he \emph{would have gotten}  if he had defected. The following statement holds directly because of the rules of the game itself (all payoffs for the row player on the D line are at least 1) -- and this is completely independent of anything the column player would have done: ``If the row player had chosen to defect, he would have obtained a payoff of at least 1.''

So if he had chosen to defect, the row player would have obtained a better payoff ($1>0$). Thus, we proved that (note the indicative tense: this is a logical implication): ``if the row player chooses to cooperate and knows that the final outcome will be CD (0,3), then his decision to cooperate is not rational.''

Likewise, for DC (3,0), we can assume the column player chooses to cooperate, knowing the outcome of the game will be DC (3,0) and that she thus gets 0. But if she had defected, she would have obtained at least 1. Thus, we showed that the following logical statement holds: ``If the column player chooses to cooperate and knows that the final outcome will be CD (0,3), then her decision to cooperate is not rational.''

This same reasoning can be done, for any game, with any outcome that is not individually rational: non-individually rational outcomes cannot be willingly reached under Necessary Rationality and Necessary Knowledge of Strategies. Since the maximin of both players is 1, any outcome with a payoff less than one for the row player, or with a payoff less than one for the column player, contradicts our assumptions. This is the case with DC and CD. The first and second matrix in Figure \ref{fig-example1} show the elimination of CD and DC on the game.

What about the other outcomes? At that first step, only assuming rationality and knowledge of strategies, we cannot do better than that. For example, (1,1) or (2,2) may or may not correspond to rational behavior of the players, which depends on the counterfactual payoffs, the payoffs that the agents \emph{would otherwise have gotten} if they had played otherwise. To tell, we need to get to step 2 of the reasoning.

The key part of the reasoning is the awareness of the fact that these two outcomes DC (3,0) and CD (0,3) cannot be the solutions of the game under our assumptions. With these outcomes discarded, we can enter further rounds of elimination with the same reasoning: step 2.

\begin{figure}
\begin{center}

\resizebox{\textwidth}{!}{
\begin{tabular}{lcccr}

\begin{tabular}{|r|c|c|}
\hline
& C & D \\
\hline
C & 2, 2& 0, 3\\
\hline
D & 3, 0& 1,1\\
\hline
\end{tabular}

&

$\Rightarrow$

&

\begin{tabular}{|r|c|c|}
\hline
& C & D \\
\hline
C & 2, 2& \cellcolor{black!25}0, 3\\
\hline
D & \cellcolor{black!25}3, 0& 1,1\\
\hline
\end{tabular}
&

$\Rightarrow$

&

\begin{tabular}{|r|c|c|}
\hline
& C & D \\
\hline
C & 2, 2& \cellcolor{black!25}0, 3\\
\hline
D & \cellcolor{black!25}3, 0& \cellcolor{black!25}1,1\\
\hline
\end{tabular}

\end{tabular}
}
\end{center}
\caption{Iterated elimination of preempted strategy profiles in the Prisoner's dilemma. (2,2) is the unique PTE and coincides with the Hofstadter, superrational equilibrium.}
\label{fig-example1}
\end{figure}

\subsubsection{It is irrational to defect}

At step 2, we re-iterate our reasoning by looking for the maximins -- but taking into account that the players cannot knowingly and rationally play towards either DC (3,0) or CD (0,3).

The row player has a minimum guaranteed payoff of 1 if he defects, and of 2 if he cooperates (because CD has been eliminated). His maximin is thus 2. Likewise, the column player has a maximin of 2. With these maximins in mind, we can proceed with the reasoning that eliminates any outcomes with a payoff of less than 2 for the row player, or less than 2 for the row player. The only such outcome is DD (1,1). We now thus show that DD (1,1) cannot possibly be reached under Necessary Rationality and Necessary Knowledge of Strategies. For this, we show that, if all of the following holds:

\begin{enumerate}
\item the row player chooses to defect and knows that the final outcome will be DD (1,1) (principle 2);
\item the row player would have acted rationally if he had cooperated (principle 1);
\item the row player would also have known the outcome if it had been different (principle 2).
\end{enumerate}

then the row player is not acting rationally.

So let us assume that the row player defects, and that he knows the final outcome will be DD and thus gets a payoff of 1. To tell whether he is rational, we need to know what payoff he would have gotten if he had cooperated. From point 3 above, we know that if he had cooperated, and thus reached a different outcome, he would have known this alternative outcome. We also established, in the previous section, that if, in some possible world, the row player chooses to cooperate and knows that the final outcome will be CD (0,3), then his decision to cooperate is not rational.

It follows that, if the row player had chosen to cooperate, the final outcome could not have been CD, as (i) knowing this, he would then have been irrational, and (ii) point 2 above says that he had chosen to cooperate, he would have been rational. Thus, if the row player had cooperated, the final outcome would have been CC (2, 2), and he would have obtained a payoff of 2. Since $2>1$, it follows directly that the row player, choosing to defect and knowing that if he had cooperated, he would have gotten a higher payoff, is irrational. (1,1) is thus also incompatible with Necessary Rationality and Necessary Knowledge of Strategies.

\subsubsection{It is rational to cooperate}

In the Nash paradigm, we compare payoffs to find a best response to the opponent's strategy, held fixed, and this best response is the rational choice. But in the non-Nashian paradigm, knowing that it is irrational to defect does not imply that it is rational to cooperate, because utilities are compared against alternative possible worlds, and counterfactual dependencies are not a symmetric relationship in general. We need to show that it is rational to cooperate with separate reasoning.

For this, we show that, if all of the following holds:

\begin{enumerate}
\item the row player chooses to cooperate and knows that the final outcome will be CC (2,2) (principle 2);
\item the column player\footnote{We do mean the column player -- this shows how this differs from the reasoning in the previous section because counterfactual dependencies are asymmetric.} would have acted rationally if the row player had defected (principle 1);
\item the column player would also have known the outcome if it had been different (principle 2).
\end{enumerate}

then the row player is acting rationally. 

So let us assume that the row player cooperates, and that he knows the final outcome will be CC and thus gets a payoff of 2. To tell whether he is rational, we need to know what payoff he would have gotten if he had defected.

From point 3 above, we know that if he had defected, and thus reached a different outcome, the column player would have known this alternative outcome. We also established, in the previous section, that if, in some possible world, the column player chooses to cooperate and knows that the final outcome will be DC (3, 0), then her decision to cooperate is not rational.

It follows that, if the row player had chosen to defect, the final outcome could not have been DC, as (i) knowing this, the column player would then have been irrational, and (ii) point 2 above says that he had chosen to defect, she -- the column player -- would have been rational. Thus, if the row player had defected, the final outcome would have been DD (1, 1), and he would have obtained a payoff of 1. Since $2>1$, it followed directly that the row player, choosing to cooperate and knowing that if he had defected, he would have gotten a lesser payoff, is acting rationally.

The same reasoning can be done for the column player, for whom it is rational to cooperate under our assumptions. The only remaining outcome, (2,2), thus correspond to the rational choice of cooperating for both players, under Necessary Rationality and Necessary Knowledge of Strategies. This is shown in the final matrix of Figure \ref{fig-example1}.

The key difference with the Nash equilibrium is in the counterfactual statements, which lead to a different comparison of the utilities obtained with each strategic choice and under counterfactual implications. Under Necessary Rationality and Necessary Knowledge of Strategies (or the belief thereof), it is rational for both players to cooperate. This line of reasoning reaches the same outcome as Douglas Hofstadter's reasoning but does not use the argument of symmetry. It can thus be extended to other games in normal form. This is what we are going to do now.

\section{The Perfectly Transparent Equilibrium}
\label{section-pte}

The PTE is defined for games in normal form, with one additional assumption similar to that of its extensive form counterpart, namely, that the payoffs are in general position. In this part, we give a general, formal algorithm that reproduces the reasoning made in Section \ref{section-nnpt}. The epistemic proof based on Kripke semantics is given as a separate technical report \citep{Fourny2018b} and is too long (32 pages) to be included in this paper.

\subsection{Preemption}

The PTE is based on an iterated elimination of strategy profiles that cannot possibly be the solution of the game under Necessary Knowledge of Strategies and Necessary Rationality, because players otherwise \emph{would have deviated}. The elimination of such strategy profiles is called preemption.

Informally, a strategy profile $\overrightarrow\tau$ is preempted by a strategy $\sigma_i$ (of any player $i$) if player $i$ is worse off with $\overrightarrow\tau$ than with the minimum payoff that she is assured of getting $\sigma_i$ no matter what the opponents would (counterfactually) have done had she picked $\sigma_i$. It would thus be irrational to a player to pick $\tau_i$ under our assumptions. In the prisoner's dilemma, for example, at step 1, CD is preempted by the row player defecting, and DC is preempted by the column player defecting.

\subsubsection{The first round of elimination: individual rationality}

The complete elimination scheme is obtained by finding the maximin utility. The concept of maximin utility is commonly found in the game theory literature and is used in the definition of individual rationality. Each strategy has a minimum guaranteed payoff (no matter what the opponents would do), and the maximin utility for a given player is the maximal utility among these minimum guaranteed payoffs.

All strategy profiles that do not Pareto dominate the maximin utility, that is, that are not individually rational, are preempted. Indeed, if a strategy profile gives, to some player, a payoff inferior to their maximin utility, it means that the player, who would be worse off with this strategy profile, could simply have deviated to the strategy that guarantees him his maximin utility. He would thus not have been rational.

Necessary Rationality and Necessary Knowledge of Strategies\footnote{As illustrated in the former section, for the first step, rationality and knowledge of strategies in the actual world are sufficient conditions.} together entail that such outcomes that are not individually rational are thus impossible. In other words, the first round of elimination comes down to eliminating all strategy profiles that are not individually rational.

\begin{definition}[$1^{st}$-level-preempted strategy profile]
Given a game in normal form, with pure strategies and with no ties $\Gamma=(P, (\Sigma_i)_i, (u_i)_i)$, a strategy profile is $1^{st}$-level-preempted if it is not individually rational. In other words, any strategy profile that does not Pareto-dominate the maximin utility is $1^{st}$-level-preempted. For a game $\Gamma$, the strategy profiles $\overrightarrow\sigma\in \mathcal{S}_1(\Gamma)$ that survive the first round of elimination are characterized with:

$$\mathcal{S}_1(\Gamma) = \{ \overrightarrow\sigma \quad | \quad \forall i \in P, u_i(\overrightarrow\sigma) \ge \max_{\tau_i\in\Sigma_i} \min_{\tau_{-i}\in\Sigma_{-i}} u_i(\tau_i, \tau_{-i}) \}$$

\end{definition}

In the literature, the maximin utility is often called minimax, which deserves a clarification to avoid any confusion. Depending on the context, this is on the one hand because a maximin on the gains is equivalent to minimax on the losses, i.e., a player minimizes the worst-case loss. Some frameworks are based on losses rather than gains. This is also, on the other hand, because in zero-sum games, a maximin on an agent's gains is equivalent to a minimax on the opponent's gains (and to a maximin on the opponent's losses). We stick to the maximin terminology because we are looking at gains, and the formula reads as ``maximin''. Our definition of individual rationality matches that given by \citet{Halpern:2013aa}.

\subsubsection{Subsequent rounds of elimination}

In subsequent rounds of eliminations, only strategy profiles that survived previous rounds can be considered for computing the minimums in the maximin utility. Indeed, in all possible worlds, both agents know the laws of logic, and under Necessary Rationality and Necessary Knowledge of Strategies, came to the conclusion that eliminated strategy profiles are impossible\footnote{In the formal sense of impossible possible worlds, see \citep{Rantala1982}. In these worlds, truth is assigned manually to logical formulas, and the assignment may contradict logical axioms and rules. More details are found in \citet{Fourny2018b}.}. This is the distinctive feature of Perfect Prediction and of the PTE.

Also, a strategy cannot preempt in subsequent rounds if all the strategy profiles it contains have been eliminated in previous rounds -- Indeed, if all strategy profiles under a strategy are known to be impossible under Necessary Rationality and Necessary Knowledge of Strategies, then the strategy itself is impossible\footnote{It does not entail that the agent has no free will. Picking this strategy would rather be an indication that the agent does not act rationally, or does not act to the best of their knowledge, under the assumptions at hand. Necessary Rationality and Necessary Knowledge of Strategies would then not hold. This is, of course, a different kind of rationality, as an irrational strategy according to the PTE may be rational under Nash semantics, and vice-versa. In both cases, agents optimize their utility to the best of their knowledge, but the counterfactual knowledge assumptions differ between the semantics.}. It can thus no longer be considered in the computation of the maximum in the maximin utility\footnote{This must be made explicit in the equation, as in the extensive form, as we will shortly see. This is because otherwise, the minimum utility for a strategy in which all strategy profiles have been eliminated would be $+\infty$, and the maximin utility would then be $+\infty$.}.

We now give the definition of subsequent rounds, when we no longer consider previously eliminated strategy profiles.

\begin{definition}[$k^{th}$-level-preempted strategy profile]
Given a game in normal form, with pure strategies and with no ties $\Gamma=(P, (\Sigma_i)_i, (u_i)_i)$, a strategy profile is $k^{th}$-level-preempted, for $k>1$, if it does not Pareto-dominate the maximin utility, where the maximin is only taking into account strategy profiles that are not $(k-1)^{th}$-level preempted. The strategy profiles $\overrightarrow\sigma\in \mathcal{S}_k(\Gamma)$ that survived the $k^{th}$ round of elimination are characterized with:

$$\mathcal{S}_k(\Gamma) = \{ \overrightarrow\sigma | \forall i \in P, u_i(\overrightarrow\sigma) \ge$$

$$\max_{
\begin{array}{c}
\tau_i\in\Sigma_i
\\
{\scriptstyle \text{s.t.} \exists \tau_{-i}\in\Sigma_{-i}, (\tau_i, \tau_{-i})\in \mathcal{S}_{k-1}(\Gamma)}
\end{array}
}
\quad
\min_{
\begin{array}{c}
\tau_{-i}\in\Sigma_{-i}
\\
{\scriptstyle \text{s.t.} (\tau_i, \tau_{-i})\in \mathcal{S}_{k-1}(\Gamma)}
\end{array}
} u_i(\tau_i, \tau_{-i}) \}$$
\end{definition}

\subsubsection{Convergence}

The iterated elimination converges at some point.

\begin{lemma}[Convergence]
The sequence of sets $(\mathcal{S}_i(\Gamma))_{i\in\mathbb{N}}$ converges and reaches its limit: at a certain point, no more strategy profiles get eliminated. We denote this limit $\mathcal{S}(\Gamma)$
\end{lemma}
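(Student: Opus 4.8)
The plan is to establish that the sequence $(\mathcal{S}_k(\Gamma))_{k\ge 1}$ is nested and decreasing, i.e. $\mathcal{S}_{k+1}(\Gamma)\subseteq\mathcal{S}_k(\Gamma)$ for every $k$, and then to invoke finiteness of the strategy space to deduce that it must stabilise. Since the game is presented as a payoff matrix (or tensor), I assume each $\Sigma_i$, and hence $\Sigma=\prod_{i\in P}\Sigma_i$, is finite. For brevity, write $\mu_i^{(k)}$ for the level-$k$ maximin value of player $i$, that is, the right-hand maximin appearing in the definition of $\mathcal{S}_k(\Gamma)$:
\[
\mu_i^{(k)}=\max_{\substack{\tau_i\in\Sigma_i\\ \exists\,\tau_{-i}:\,(\tau_i,\tau_{-i})\in\mathcal{S}_{k-1}(\Gamma)}}\ \min_{\substack{\tau_{-i}\in\Sigma_{-i}\\ (\tau_i,\tau_{-i})\in\mathcal{S}_{k-1}(\Gamma)}} u_i(\tau_i,\tau_{-i}),
\]
with the convention $\mathcal{S}_0(\Gamma)=\Sigma$, so that by definition $\mathcal{S}_k(\Gamma)=\{\overrightarrow\sigma\mid \forall i\in P,\ u_i(\overrightarrow\sigma)\ge\mu_i^{(k)}\}$ for all $k\ge 1$ (and $\mathcal{S}_1$ is exactly individual rationality).

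The core step is a monotonicity property of these maximin values: whenever $\mathcal{S}_k(\Gamma)\neq\emptyset$, one has $\mu_i^{(k+1)}\ge\mu_i^{(k)}$ for every player $i$. The argument does not track which strategy attains the maximum from one round to the next; instead it uses the defining property of $\mathcal{S}_k(\Gamma)$ directly. By definition, every surviving profile $\overrightarrow\sigma\in\mathcal{S}_k(\Gamma)$ satisfies $u_i(\overrightarrow\sigma)\ge\mu_i^{(k)}$. Now $\mu_i^{(k+1)}$ is a maximin taken over $\mathcal{S}_k(\Gamma)$: for any strategy $\tau_i$ occurring in some profile of $\mathcal{S}_k(\Gamma)$, the corresponding inner minimum $\min\{u_i(\tau_i,\tau_{-i})\mid (\tau_i,\tau_{-i})\in\mathcal{S}_k(\Gamma)\}$ is a minimum of quantities each at least $\mu_i^{(k)}$, hence is itself at least $\mu_i^{(k)}$; taking the outer maximum over the nonempty set of such $\tau_i$ preserves this lower bound, giving $\mu_i^{(k+1)}\ge\mu_i^{(k)}$. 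Since this holds for every $i$, the constraints defining $\mathcal{S}_{k+1}(\Gamma)$ are at least as stringent as those defining $\mathcal{S}_k(\Gamma)$, so $\mathcal{S}_{k+1}(\Gamma)\subseteq\mathcal{S}_k(\Gamma)$. (If instead $\mathcal{S}_k(\Gamma)=\emptyset$, then under the intended reading that preempted profiles are removed permanently -- equivalently, treating the empty set as a fixpoint of the elimination operator -- $\mathcal{S}_{k'}(\Gamma)=\emptyset$ for all $k'\ge k$; so nestedness holds in every case.)

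With nestedness in hand, the cardinalities $|\mathcal{S}_k(\Gamma)|$ form a non-increasing sequence of non-negative integers and are therefore eventually constant; together with $\mathcal{S}_{k+1}(\Gamma)\subseteq\mathcal{S}_k(\Gamma)$ this yields an index $N$ with $\mathcal{S}_N(\Gamma)=\mathcal{S}_{N+1}(\Gamma)$. Once two consecutive sets coincide, all later maximin vectors are computed from the same set and hence agree ($\mu_i^{(N+2)}=\mu_i^{(N+1)}$ for every $i$), so $\mathcal{S}_{N+2}(\Gamma)=\mathcal{S}_{N+1}(\Gamma)$, and by induction $\mathcal{S}_k(\Gamma)=\mathcal{S}_N(\Gamma)$ for all $k\ge N$; setting $\mathcal{S}(\Gamma)=\mathcal{S}_N(\Gamma)$ gives the limit. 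The one genuinely delicate point, and the step I expect to be the main obstacle, is the maximin monotonicity: restricting to fewer profiles shrinks both the inner range (over $\tau_{-i}$), which can only raise the inner minimum, and the outer range (over $\tau_i$), which can only lower the outer maximum, so a priori the two effects pull in opposite directions and the guaranteed value could drop. The resolution above sidesteps this by reading the bound $\mu_i^{(k)}$ off the definition of $\mathcal{S}_k(\Gamma)$ rather than comparing maximisers, ensuring the outer maximum never falls below the previous floor; the only remaining care is the empty-set boundary, handled by the permanent-elimination convention.
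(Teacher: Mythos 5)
Your proof is correct and takes essentially the same route as the paper's: the key step in both is that the maximin tuple weakly increases from one round to the next because every surviving profile already clears the previous maximin, hence the sets $\mathcal{S}_k(\Gamma)$ are nested and, living in a finite powerset, must stabilize. Your write-up is somewhat more careful than the paper's two-line argument---you handle the empty-set boundary explicitly and spell out why equality of two consecutive sets forces all later sets to coincide---but these are refinements of the same argument, not a different one.
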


\begin{proof}[Convergence]
The maximin utility, in each round, Pareto-dominates the previous one. This is because (i) all profiles that have, for any player, lower payoffs than their previous maximin utility, were eliminated and (ii) for any player, the maximin utility appears as a payoff somewhere in the game because the game is finite. Since the sequence $(\mathcal{S}_i(\Gamma))_{i\in\mathbb{N}}$ is decreasing and has its values in a finite set (in the powerset of all strategy profiles), it must converge and reach its limit. $\square$
\end{proof}

\subsection{Perfectly Transparent Equilibrium}

We can finally define the Perfectly Transparent Equilibrium for games in normal form, and characterize it with the previous sequence of sets of surviving strategy profiles. 

\begin{definition}[Perfectly Transparent Equilibrium for games in normal form]
Given a game in normal form $\Gamma$, with pure strategies and with no ties, a Perfectly Transparent Equilibrium is a strategy profile that never gets eliminated. It is immune to Necessary Rationality and Necessary Knowledge of Strategies, in the sense that the players willingly, rationally jointly play towards this outcome. The set of Perfectly Transparent Equilibria is $\mathcal{S}(\Gamma)$.
\end{definition}

The algorithm for computing the PTE for any game in normal form and with no ties follows: one iteratively eliminates, in each round, all strategy profiles that do not Pareto dominate the current tuple of maximin utilities.

\subsection{Properties of the Perfectly Transparent Equilibrium}
\label{section-results}

We now give two theoretical results concerning the PTE on games in normal form: uniqueness and Pareto optimality. The proofs are relatively straightforward.

\subsubsection{Uniqueness}

\begin{theorem}[Uniqueness]
Given a game in normal form, with pure strategies, and with no ties, if a PTE exists, then it is unique.
\end{theorem}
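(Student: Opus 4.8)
The plan is to argue by contradiction, showing that the stable limit $\mathcal{S}(\Gamma)$ cannot contain two distinct profiles. The engine of the argument is the observation that, once the elimination has stabilized, every surviving profile must already weakly dominate the maximin tuple computed \emph{over the surviving set itself}; otherwise one more round would delete it, contradicting the fact that $\mathcal{S}(\Gamma)$ is a fixpoint of the round operator (the Convergence lemma). So first I would record this stability property: writing $m_i = \max_{s_i \in A_i}\min_{s_{-i} : (s_i,s_{-i})\in\mathcal{S}(\Gamma)} u_i(s_i,s_{-i})$, where $A_i$ is the set of player-$i$ strategies that occur in some surviving profile, stability forces $u_i(\overrightarrow\rho)\ge m_i$ for every $\overrightarrow\rho\in\mathcal{S}(\Gamma)$ and every player $i$.

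Next, assuming $|\mathcal{S}(\Gamma)|\ge 2$, I would fix an arbitrary player $i$ and partition $\mathcal{S}(\Gamma)$ according to the strategy that player $i$ plays. For each strategy $s_i\in A_i$ let $v(s_i)$ be the minimum of $u_i$ over the surviving profiles in which player $i$ plays $s_i$; this minimum is attained because the game is finite, and $m_i=\max_{s_i\in A_i}v(s_i)$. The no-ties hypothesis enters here decisively: two distinct strategies $s_i\ne s_i'$ are witnessed by profiles differing in their $i$-th coordinate, hence by distinct profiles, so $v(s_i)\ne v(s_i')$. Consequently the outer maximum is attained at a \emph{unique} strategy $s_i^\ast$, and $v(s_i) < m_i$ strictly for every other $s_i\in A_i$.

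The crux is then to turn this strict gap into a violated profile. If some strategy $s_i'\ne s_i^\ast$ still appeared in a surviving profile, the profile attaining $v(s_i')$ would itself lie in $\mathcal{S}(\Gamma)$ yet give player $i$ a payoff $v(s_i') < m_i$ --- contradicting the stability property from the first step. Hence $A_i=\{s_i^\ast\}$: all surviving profiles assign player $i$ the same strategy. Since $i$ was arbitrary, all surviving profiles agree coordinate by coordinate and are therefore equal, contradicting $|\mathcal{S}(\Gamma)|\ge 2$. This yields $|\mathcal{S}(\Gamma)|\le 1$, i.e. uniqueness whenever a PTE exists.

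I expect the main obstacle to be the bookkeeping around the maximin rather than the core idea: one must be careful that the maximum defining $m_i$ ranges only over strategies actually present in $\mathcal{S}(\Gamma)$, so that $A_i$ and the partition match, and that the profile realizing the inner minimum $v(s_i')$ is genuinely a surviving profile --- it is this last point that makes stability, not merely first-round individual rationality, the property being contradicted. The no-ties assumption is not a technical convenience here but essential: with ties one could have several strategies tying at the maximin value and hence a genuinely non-singleton stable set, so the strictness $v(s_i')<m_i$ is exactly what forces the collapse to a single strategy per player.
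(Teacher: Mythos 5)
Your proof is correct and rests on the same engine as the paper's: with no ties, a surviving set containing two distinct profiles forces some player to have two distinct strategies present in it, whose associated row-minima are distinct payoffs of distinct profiles, so the profile attaining the lower one falls strictly below that player's maximin and cannot survive. The only difference is packaging --- the paper runs this argument dynamically (the sequence $(\mathcal{S}_k(\Gamma))_k$ is strictly decreasing until it reaches a singleton or the empty set), whereas you run it statically at the fixpoint $\mathcal{S}(\Gamma)$ provided by the Convergence lemma, making explicit two steps the paper leaves terse: why a non-singleton set forces some player to have two surviving strategies, and why the no-ties assumption makes the corresponding minima distinct.
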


\begin{proof}[Uniqueness]
The sequence $(\mathcal{S}_i(\Gamma))_{i\in\mathbb{N}}$ is strictly decreasing until either the empty set or a singleton is reached. This is because there are no ties: for any player, the current maximin utility must, by definition of the max, be strictly greater than one of the payoffs for a different strategy (which exists for at least one player if the current set is not a singleton). The corresponding strategy profile will be eliminated in the next round.
As a consequence, the sequence can only converge towards a singleton (this is the unique PTE of the game) or the empty set (there are no PTEs for the game). $\square$
\end{proof}

\subsubsection{Pareto optimality}
\label{section-pareto-optimal}

\begin{theorem}[Pareto optimality]
Given a game in normal form, with pure strategies, and with no ties, if the PTE exists, then it is Pareto-optimal amongst all strategy profiles.
\end{theorem}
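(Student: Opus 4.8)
The plan is to argue by contradiction, exploiting the fact that the rule defining $\mathcal{S}_k(\Gamma)$ compares each profile against a \emph{common} threshold tuple (the round-$k$ maximin utilities), so that Pareto domination is preserved upward through every round of elimination. Let $\overrightarrow\sigma^*$ denote the PTE, which is unique by the preceding theorem, and suppose toward a contradiction that it is not Pareto-optimal amongst all strategy profiles. Then some profile $\overrightarrow\tau \neq \overrightarrow\sigma^*$ weakly Pareto-dominates it. First I would record the simplification afforded by the no-ties assumption: since $u_i$ is injective on distinct profiles, $u_i(\overrightarrow\tau) \neq u_i(\overrightarrow\sigma^*)$ for every player $i$, so weak domination upgrades to $u_i(\overrightarrow\tau) > u_i(\overrightarrow\sigma^*)$ for \emph{all} $i \in P$.

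The core step is to show that $\overrightarrow\tau$ survives every round, i.e. $\overrightarrow\tau \in \mathcal{S}_k(\Gamma)$ for every $k$. Write $m_i^{(k)}$ for the round-$k$ maximin utility of player $i$, namely the right-hand side of the defining inequality of $\mathcal{S}_k(\Gamma)$; this quantity is determined by $\mathcal{S}_{k-1}(\Gamma)$ alone and is the same threshold against which every profile is tested. Because $\overrightarrow\sigma^*$ is the PTE it survives round $k$, so $u_i(\overrightarrow\sigma^*) \ge m_i^{(k)}$ for all $i$; combining with strict domination yields $u_i(\overrightarrow\tau) > u_i(\overrightarrow\sigma^*) \ge m_i^{(k)}$ for all $i$, which is exactly the condition placing $\overrightarrow\tau$ in $\mathcal{S}_k(\Gamma)$. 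Since this holds at $k=1$, where $m_i^{(1)}$ is the unrestricted maximin and $\mathcal{S}_1(\Gamma)$ is individual rationality, and by the identical argument at every subsequent round, $\overrightarrow\tau$ is never eliminated.

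Having established $\overrightarrow\tau \in \mathcal{S}_k(\Gamma)$ for all $k$, I would pass to the limit given by the Convergence lemma to conclude $\overrightarrow\tau \in \mathcal{S}(\Gamma)$. But $\overrightarrow\sigma^* \in \mathcal{S}(\Gamma)$ as well, and $\overrightarrow\tau \neq \overrightarrow\sigma^*$, so $\mathcal{S}(\Gamma)$ would contain two distinct profiles, contradicting the uniqueness theorem. Hence no dominating $\overrightarrow\tau$ can exist and $\overrightarrow\sigma^*$ is Pareto-optimal.

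The step I expect to require the most care is justifying that $m_i^{(k)}$ is genuinely a common threshold for $\overrightarrow\tau$ and $\overrightarrow\sigma^*$: that it is built from $\mathcal{S}_{k-1}(\Gamma)$ alone and is independent of which profile one tests for survival. This is precisely what lets a strict inequality at the level of payoffs transfer to survival without any separate analysis of the outer $\max$ over admissible strategies, since that outer maximization affects only the \emph{value} of $m_i^{(k)}$ and not the comparison. It rests on the monotonicity already recorded in the Convergence lemma, namely that the maximin tuple weakly Pareto-increases each round, so that $\mathcal{S}_k(\Gamma) \subseteq \mathcal{S}_{k-1}(\Gamma)$ and membership in $\mathcal{S}_k(\Gamma)$ is consistently characterized by the single inequality against $m_i^{(k)}$; with that nesting in hand the remainder of the argument is routine.
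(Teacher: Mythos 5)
Your proposal is correct and follows essentially the same route as the paper's proof: a dominating profile $\overrightarrow\tau$ would clear every round's maximin threshold (since the PTE clears it and $\overrightarrow\tau$ strictly dominates the PTE by the no-ties assumption), hence would survive to the limit and be a second PTE, contradicting uniqueness. Your version merely spells out as an explicit induction over all rounds what the paper compresses into a remark about the last strictly decreasing round of elimination.
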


\begin{proof}[Pareto optimality]
If an existing PTE were not Pareto-optimal, then there would be a distinct strategy profile that Pareto-dominates the PTE. But this strategy profile would not have been eliminated in the last strictly decreasing round of elimination, because it would also Pareto-dominate the current tuple of maximin utilities and be a PTE as well. This contradicts uniqueness. $\square$
\end{proof}

\subsubsection{Existence}
\label{section-existence}

It is important to note that the equilibrium does not always exist. We provide counterexamples in Section \ref{section-counterexamples}.

One interesting property of games in normal form is the correlations of the payoffs. The two extremes of the spectrum are on the one hand zero-sum games, corresponding to strategic substitutes, and on the other hand strategic complements where agents have shared interests. As a measure of payoff correlation, we use the correlation coefficient of the payoff pairs within the game, i.e., for a game $\Gamma$\footnote{We used $\tau$ for strategy profiles to avoid a confusion of notation with the standard deviation, traditionally denoted with a $\sigma$.}:

$$c(\Gamma)=\frac{\frac{1}{|\Sigma|}\displaystyle\sum_{\tau\in\Sigma} (u_1(\tau)-\mu_1)\times (u_2(\tau)-\mu_2)}{\sigma_1\sigma_2}$$

with takes its values between -1 (strategic substitutes) and 1 (strategic complements), and where $\mu_i$ is the average payoff of agent $i$ and $\sigma_i$ is the standard deviation of the payoff of agent $i$ 

We empirically estimated the chances that the equilibrium exists for 2-player 3x3 games depending on the payoff correlation. Since we only consider ordinal payoffs, there are about $\frac{9!\times9!}{3!\times3!}$ such games, i.e., 3.66 billion. We took a simple random sample with replacement of 204,800,000 games obtained by randomly generating payoff permutations (of the integers 0 to 8) for each player. Thus, for these games, $\mu_1=\mu_2=4$ , and $\sigma_1=\sigma_2=2.581988897$. 

We calculated the payoff correlation of each game and rounded to the nearest multiple of 0.025, which gave us payoff correlation buckets. In each bucket, we computed the percentage of games for which (i) a PTE, (ii) a Nash Equilibrium (NE), and (iii) a unique NE exists\footnote{For a bit of theory: this is a sample mean estimator if we assign 1 to existence and 0 to non-existence. Testing for the existence of an equilibrium on a randomly picked game is a Bernoulli trial. Since the sample is with replacement, the estimator follows a binomial distribution. The size of the confidence intervals of the estimated existence ratio diminishes with the square root of the sample size and the estimation can be considered rather accurate when there are more than 1000 samples in a bucket.}.

The result is shown in Figure \ref{fig-existence-ratio}. Unsurprisingly (minimax theorem), the NE and unique NE existence ratios are close to each other for strategic substitutes, but strongly diverge for strategic complements; both the NE and the PTE follow a continuous pattern and tend towards always existing for strategic complements. We also observe that (i) in every bucket, the PTE exists (and is unique) more often than there is a unique NE. (ii) for a correlation below around -0.3 (i.e., for zero-sum games or games close to being zero-sum), the PTE exists noticeably more often than a (not necessarily unique) NE does, in addition to also not suffering from equilibrium multiplicity and to being Pareto-optimal. The PTE curve remains relatively flat across payoff correlations, while the NE existence ratio tends to decrease with strategic substitutes.

\begin{figure}
\centering{
\resizebox{0.7\textwidth}{!}{
\includegraphics{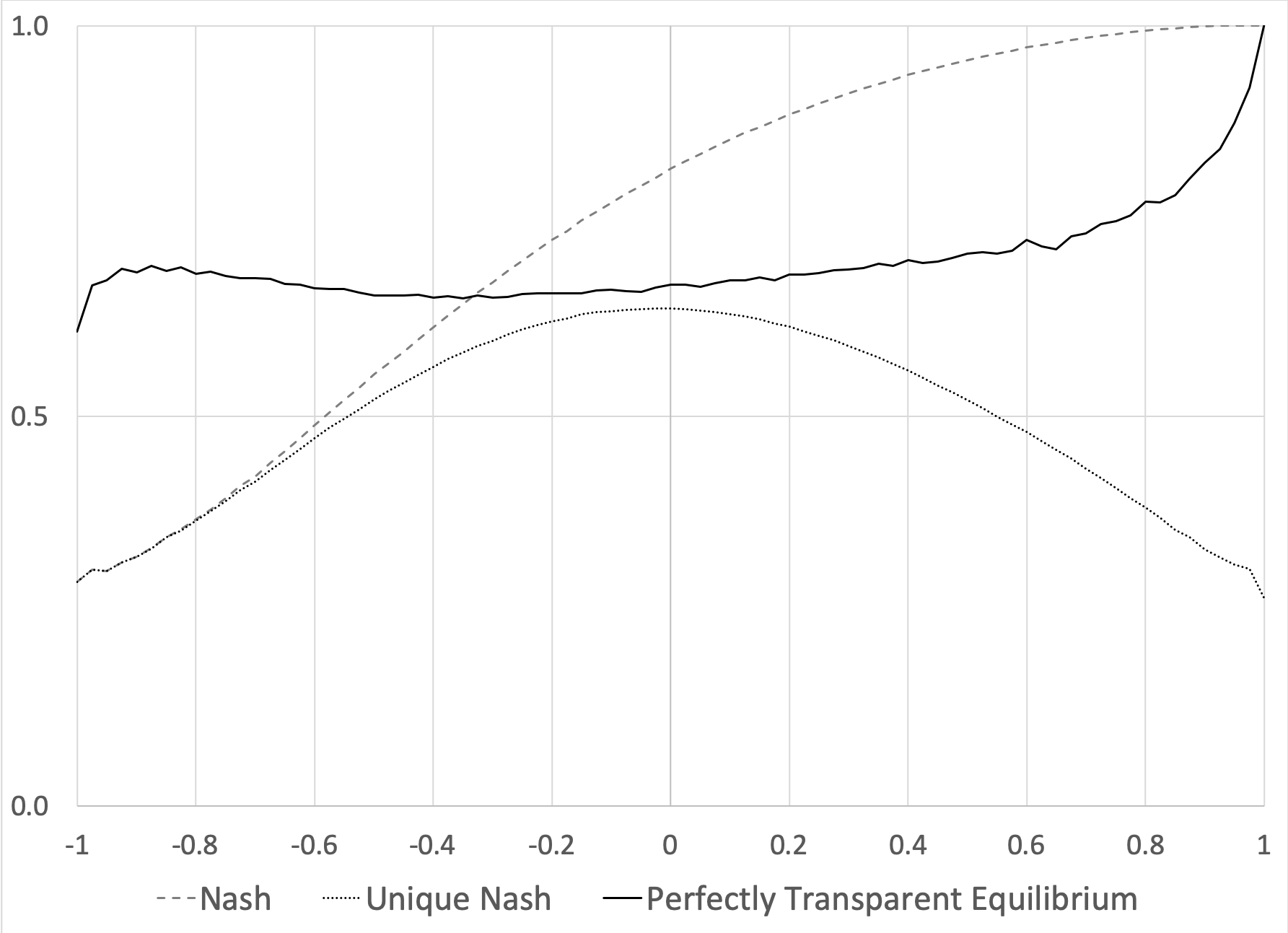}
}
}
\caption{Estimation of the existence ratio of the NE, unique NE and PTE depending on the payoff correlation of 3x3 two-player games. Precision is highest around the 0 bucket with several million data points and lowest at the extremes (-1 and 1 buckets) with a bit more than 500 data points, which is still reasonable -- and already more than 16,000 data points for payoff correlations in the -0.975 and 0.975 buckets.}
\label{fig-existence-ratio}
\end{figure}

\section{Examples}
\label{section-examples}

In this section, we give a few more examples of games and PTE computations, besides the prisoner's dilemma already solved in Section \ref{section-prisoner-dilemma-pte}. We start with an asymmetric game, then continue with a few games satisfying criteria that make them ``social dilemma'', both symmetric and asymmetric, and, in particular, show that the PTE solves these social dilemmas where Nash equilibria do not.

\subsection{Solving the PTE on an asymmetric game}

Figure \ref{fig-example2} shows an asymmetric game. Let us show that this game has a PTE.

In the first round, the players' maximin utilities are (1,1), with the maximin strategies being B guaranteeing 1 to the row player and D guaranteeing 1 to the column player. Strategy profiles AC (0,2) and BC (3,0) are eliminated because they do not Pareto-dominate the maximin tuple. Specifically, AC is preempted by B because the row player would, in any case, get a better payoff than 0 if he picked B instead. BC is preempted by D because the column player would, in any case, get a better payoff than 0 if she picked D instead.

In the second round, the new maximin utilities are (2,1). BD is preempted by A: if it were known to be the solution, the row player would pick A instead. Indeed, he knows that if he picked A instead, then the column player would have picked D as it was shown that AC is incompatible with the assumed row player's rationality, which applies in any possible world.

Only AD (2,3) remains, which is stable and immune to Necessary Rationality and Necessary Knowledge of Strategies. Knowing that AD (2,3) is the PTE, both players will stick to their strategy based on the above reasoning, as no other strategy profile is reasonable: If the row player had picked B instead, then the column player would have still picked D, and the payoff with AD would be less ($1<2$). If the column player had picked C instead, then either the row player would not have picked A because this would have been irrational, or BC would have been the known outcome and it is the column player that would not have been rational preferring 0 over a guarantee of 1.

So AD (2,3) is the outcome willingly and knowingly obtained by rational players under Necessary Rationality and Necessary Knowledge of Strategies: it is the PTE. It happens, in this case, to also be a Nash equilibrium because A and D are best responses to each other.

\begin{figure}
\begin{center}

\resizebox{\textwidth}{!}{
\begin{tabular}{lcccr}

\begin{tabular}{|r|c|c|}
\hline
& C & D \\
\hline
A & 0, 2& 2, 3\\
\hline
B & 3, 0& 1,1\\
\hline
\end{tabular}

&

$\Rightarrow$

&

\begin{tabular}{|r|c|c|}
\hline
& C & D \\
\hline
A & \cellcolor{black!25}0, 2& 2, 3\\
\hline
B & \cellcolor{black!25}3, 0& 1,1\\
\hline
\end{tabular}

&

$\Rightarrow$

&

\begin{tabular}{|r|c|c|}
\hline
& C & D \\
\hline
A & \cellcolor{black!25}0, 2& 2, 3\\
\hline
B & \cellcolor{black!25}3, 0&  \cellcolor{black!25}1,1\\
\hline
\end{tabular}

\end{tabular}
}
\end{center}
\caption{Iterated elimination of preempted strategy profiles in an asymmetric game. The tuples of maximin utilities are (1,1), then (2,3). (2,3) is the unique PTE and is immune against the common knowledge that it is the PTE.}
\label{fig-example2}
\end{figure}

\subsection{Public Goods Game}
\label{section-social-dilemma}

The Public Goods Game is one of the social dilemma presented by \citet{Capraro2015}.

\citet{Capraro2015} define a social dilemma as follows: (i) the game has exactly one Nash equilibrium and (ii) the game has exactly one improvement over the Nash equilibrium, i.e., exactly one profile, distinct from the Nash Equilibrium, that Pareto-dominates the Nash equilibrium. In some cases, the PTE is this exact outcome that Pareto-dominates the Nash equilibrium. In some other cases with several Pareto-optima, the PTE is a different Pareto-optimal outcome but does not Pareto-dominate the Nash equilibrium. 

Each player has one dollar and can decide to contribute part of it to the pool. Then, the pool is multiplied by a factor greater than one and smaller than the number of players and is then evenly distributed to the players.

For a player $i\in P$ contributing $x_i>0$, the worst that can happen is that nobody else contributed. In this case, he is left with less than a dollar. His maximin is thus 1 and is obtained by contributing nothing. At this first stage, we thus know that under Necessary Rationality and Necessary Knowledge of Strategies, all agents are guaranteed not to lose anything. We give an example with two players, a constant factor of 1.5 and contributions of 0, 0.5 and 1 in Figure \ref{fig-goods-game}. We can see that this game has no ties, that the PTE exists, is unique and coincides with the Hofstadter equilibrium, with everybody contributing as much as they can.

In the more general case with no ties, any number of players and any constant, the reasoning at every round of elimination is that a higher worst case than the minimum possible contribution (taking eliminated profiles into account) can be obtained by contributing a bit more (because there are no ties, and lower payoffs were eliminated in previous rounds). The elimination sequence then converges to the bottom-right, welfare-maximizing profile, which is the PTE. The PTE thus explains full cooperation on this game.

\begin{figure}
\begin{center}

\resizebox{\textwidth}{!}{
\begin{tabular}{lcccr}

\begin{tabular}{|r|c|c|c|}
\hline
& D & I & C \\
\hline
D & 1, 1& 1.375, 0.5 & 1.75, 0.75\\
\hline
I & 0.5, 1.375& 1.25,1.25& 1.625,1.125\\
\hline
C & 0.75, 1.7& 1.125,1.625& 1.5,1.5\\
\hline
\end{tabular}

&

$\Rightarrow$

&

\begin{tabular}{|r|c|c|c|}
\hline
& D & I & C \\
\hline
D & 1, 1&\cellcolor{black!25} 1.375, 0.5 & \cellcolor{black!25}1.75, 0.75\\
\hline
I & \cellcolor{black!25}0.5, 1.375& 1.25,1.25& 1.625,1.125\\
\hline
C & \cellcolor{black!25}0.75, 1.7& 1.125,1.625& 1.5,1.5\\
\hline
\end{tabular}

& 

$\Rightarrow$

\\
\\
\begin{tabular}{|r|c|c|c|}
\hline
& D & I & C \\
\hline
D &  \cellcolor{black!25} 1, 1&\cellcolor{black!25} 1.375, 0.5 & \cellcolor{black!25}1.75, 0.75\\
\hline
I & \cellcolor{black!25}0.5, 1.375&  1.25,1.25& \cellcolor{black!25} 1.625,1.125\\
\hline
C & \cellcolor{black!25}0.75, 1.7& \cellcolor{black!25} 1.125,1.625& 1.5,1.5\\
\hline
\end{tabular}

&

$\Rightarrow$

&

\begin{tabular}{|r|c|c|c|}
\hline
& D & I & C \\
\hline
D &  \cellcolor{black!25} 1, 1&\cellcolor{black!25} 1.375, 0.5 & \cellcolor{black!25}1.75, 0.75\\
\hline
I & \cellcolor{black!25}0.5, 1.375& \cellcolor{black!25}  1.25,1.25& \cellcolor{black!25} 1.625,1.125\\
\hline
C & \cellcolor{black!25}0.75, 1.7& \cellcolor{black!25} 1.125,1.625& 1.5,1.5\\
\hline
\end{tabular}

\end{tabular}
}
\end{center}
\caption{The Goods game with two players, a constant factor of 1.5 and discrete possible contributions: nothing (D), half (I), everything (C). The PTE and the Hofstadter equilibrium coincide on (1.5, 1.5). The Nash equilibrium is, however, (1,1) because C is a dominant strategy for all players.}
\label{fig-goods-game}
\end{figure}

\subsection{Bertrand Competition}

Another example of a game by \citet{Capraro2015} is the Bertrand Competition game, where companies set their prices. Only those with the lowest price sell their product, sharing among ties. Figure \ref{fig-bertrand-game} shows how to solve this game for its PTE, which exists in this case.

The Perfectly Transparent Equilibrium coincides with all companies picking the highest price, which is the welfare-maximizing equilibrium. Indeed, a single round of elimination eliminates all outcomes not on the diagonal, because the maximum worst gain when picking the lowest price is the lowest price divided by the number of players. The worst gain on all other strategies is zero. Since all strategies outside the diagonal involve 0 for some player, they are all eliminated. A second round of elimination only retains the welfare-maximizing equilibrium. The PTE thus also explains full cooperation on this game.

\begin{figure}
\begin{center}

\resizebox{\textwidth}{!}{
\begin{tabular}{lccr}

\begin{tabular}{|r|c|c|c|}
\hline
& 2 & 4 & 6 \\
\hline
2 & 1, 1& 2, 0 & 2, 0\\
\hline
4 &0, 2& 2, 2&  4, 0\\
\hline
6 & 0, 2& 0, 4& 3, 3\\
\hline
\end{tabular}

&

$\Rightarrow$

&

\begin{tabular}{|r|c|c|c|}
\hline
& 2 & 4 & 6 \\
\hline
2 & 1, 1& \cellcolor{black!25}2, 0 & \cellcolor{black!25}2, 0\\
\hline
4 &\cellcolor{black!25}0, 2& 2, 2&  \cellcolor{black!25}4, 0\\
\hline
6 & \cellcolor{black!25}0, 2& \cellcolor{black!25}0, 4& 3, 3\\
\hline
\end{tabular}

& 

$\Rightarrow$

\begin{tabular}{|r|c|c|c|}
\hline
& 2 & 4 & 6 \\
\hline
2 & \cellcolor{black!25}1, 1& \cellcolor{black!25}2, 0 & \cellcolor{black!25}2, 0\\
\hline
4 &\cellcolor{black!25}0, 2& \cellcolor{black!25}2, 2&  \cellcolor{black!25}4, 0\\
\hline
6 & \cellcolor{black!25}0, 2& \cellcolor{black!25}0, 4& 3, 3\\
\hline
\end{tabular}

\end{tabular}
}
\end{center}
\caption{The Bertrand Competition game with two players with prices settable to 2, 4 or 6. The PTE and the Hofstadter equilibrium coincide on the welfare-maximizing equilibrium. The Nash equilibrium is, however, (1,1) because 2 is a dominant strategy for all players.}
\label{fig-bertrand-game}
\end{figure}

\subsection{Traveler's dilemma}

Another example of a game by \citet{Capraro2015} is the Traveler's dilemma game. Two travelers who lost their luggage can ask for compensation. If they both ask for the same amount, they get it. Otherwise, the one who asked for the lowest amount gets it with a bonus, and the other gets that same lowest amount with the bonus deducted as a penalty. Figure \ref{fig-traveler-game} shows an instance of its game solved for its (existing) PTE.

The Perfectly Transparent Equilibrium coincides with both players picking the highest compensation number. Indeed, the maximum worst case is the lowest price, because all other strategies let to a worst-case scenario of the lowest price minus the penalty. In the second round, the maximum worst case becomes the next lowest price, and so on, until the maximum price is the only one left for both players. The PTE thus also explains full cooperation on this game.

\begin{figure}
\begin{center}

\resizebox{\textwidth}{!}{
\begin{tabular}{lcccr}

\begin{tabular}{|r|c|c|c|}
\hline
& 2 & 3 & 4 \\
\hline
2 & 2, 2& 3, 1 & 3, 1\\
\hline
3 &1, 3& 3, 3&  4, 2\\
\hline
3 & 1, 3& 2, 4& 4, 4\\
\hline
\end{tabular}

&

$\Rightarrow$

&

\begin{tabular}{|r|c|c|c|}
\hline
& 2 & 3 & 4 \\
\hline
2 & 2, 2& \cellcolor{black!25}3, 1 & \cellcolor{black!25}3, 1\\
\hline
3 &\cellcolor{black!25}1, 3& 3, 3&  4, 2\\
\hline
3 & \cellcolor{black!25}1, 3& 2, 4& 4, 4\\
\hline
\end{tabular}

& 

$\Rightarrow$

\\
\\

\begin{tabular}{|r|c|c|c|}
\hline
& 2 & 3 & 4 \\
\hline
2 & \cellcolor{black!25}2, 2& \cellcolor{black!25}3, 1 & \cellcolor{black!25}3, 1\\
\hline
3 &\cellcolor{black!25}1, 3& 3, 3&  \cellcolor{black!25}4, 2\\
\hline
3 & \cellcolor{black!25}1, 3& \cellcolor{black!25}2, 4& 4, 4\\
\hline
\end{tabular}
&

$\Rightarrow$
& 

\begin{tabular}{|r|c|c|c|}
\hline
& 2 & 3 & 4 \\
\hline
2 & \cellcolor{black!25}2, 2& \cellcolor{black!25}3, 1 & \cellcolor{black!25}3, 1\\
\hline
3 &\cellcolor{black!25}1, 3& \cellcolor{black!25}3, 3&  \cellcolor{black!25}4, 2\\
\hline
3 & \cellcolor{black!25}1, 3& \cellcolor{black!25}2, 4& 4, 4\\
\hline
\end{tabular}

\end{tabular}
}
\end{center}
\caption{The Traveler's dilemma game with two players with prices settable to 2, 3 or 4and a bonus/penalty of 2. The PTE coincides with the welfare-maximizing equilibrium. The Nash equilibrium is, however, (2,2).}
\label{fig-traveler-game}
\end{figure}

\subsection{An asymmetric social dilemma}

All social dilemmas described by \citet{Capraro2015} are symmetric games. The PTE solves them by providing counterfactual structures based on Necessary Rationality and Necessary Knowledge of Strategies that lead to a unique outcome: the welfare-maximizing profile. The Hofstadter equilibrium also explains this same profile with the symmetric diagonal argument.

But the definition of a social dilemma given by \citet{Capraro2015} does not require in general symmetry. Figure \ref{figure-asymmetric-social-dilemma} shows an asymmetric social dilemma, with the iterative deletion procedure of the PTE made explicit. This game was found with a filtering query on a sampled dataset of games by Felipe \cite{Sulser2019} during his Master's thesis. We can see that the PTE solves this asymmetric social dilemma.

Note, however, that it is not always the case that the PTE Pareto-improves a Nash equilibrium. There are thus social dilemmas in which the PTE is Pareto-optimal, but does not Pareto-dominate the Nash equilibrium.

\begin{figure}
\begin{center}

\resizebox{\textwidth}{!}{
\begin{tabular}{lcccr}

\begin{tabular}{|r|c|c|c|}
\hline
& A & B & C \\
\hline
D & 6, 8& 1, 2 & 4, 4\\
\hline
E &3, 1& 0, 7&  2, 3\\
\hline
F & 7, 0& 8, 5& 5, 6\\
\hline
\end{tabular}

&

$\Rightarrow$

&

\begin{tabular}{|r|c|c|c|}
\hline
& A & B & C \\
\hline
D & 6, 8&\cellcolor{black!25} 1, 2 &\cellcolor{black!25} 4, 4\\
\hline
E &\cellcolor{black!25}3, 1& \cellcolor{black!25}0, 7&  \cellcolor{black!25}2, 3\\
\hline
F & \cellcolor{black!25}7, 0& 8, 5& 5, 6\\
\hline

\end{tabular}

& 

$\Rightarrow$

&

\begin{tabular}{|r|c|c|c|}
\hline
& A & B & C \\
\hline
D & 6, 8&\cellcolor{black!25} 1, 2 &\cellcolor{black!25} 4, 4\\
\hline
E &\cellcolor{black!25}3, 1& \cellcolor{black!25}0, 7&  \cellcolor{black!25}2, 3\\
\hline
F & \cellcolor{black!25}7, 0& \cellcolor{black!25}8, 5& \cellcolor{black!25}5, 6\\
\hline
\end{tabular}

\end{tabular}
}
\end{center}
\caption{An asymmetric social dilemma: the PTE coincides with the welfare-maximizing profile. The Nash equilibrium, (5, 6), is Pareto-dominated by the PTE.}
\label{figure-asymmetric-social-dilemma}
\end{figure}

\section{Literature and related work on non-Nashian game theory}
\label{section-literature}

There is a growing literature of solution concepts in non-Nashian Game Theory. They all have in common the change in assumption: while in the Nash paradigm, the opponent's strategies are held fixed while optimizing one's utility, this line of research assumes that there may be a counterfactual dependence between the agent's decisions.

It is important to understand that the non-Nashian line of reasoning is not Evidential Decision Theory as explained in Section \ref{section-newcomb}. Indeed, the reasoning involves potentially asymmetric counterfactual dependencies, expressed in English as subjunctive conditionals, while EDT uses conditional probabilities.

As a rule of thumb, there are two main categories of non-Nashian approaches. The first approach is found for example in \citet{Halpern:2013aa}, who coined it as ``translucent''. While they drop the Nash assumption that opponent strategies are counterfactually independent of a player's choice of strategy, they assume a weaker version of rationality than Necessary Rationality. They call this weaker version Common Counterfactual Belief of Rationality, as we will see shortly. The other approach, which is the one taken in this paper, is fully transparent. It not only drops the Nash assumption that opponent strategies are counterfactually independent of a player's choice of strategy but replaces it with different assumptions regarding the counterfactuals: namely, that even if a player had picked a different strategy, both players would still have been rational and would still have known the outcome of the game.

Translucency has, overall, a tendency to Pareto-improve Nash equilibria (for strong solution concepts such as Shiffrin's), or to subsume Nash equilibria\footnote{This is only a conjecture formulated in vague wording at this point, i.e., that non-fully-transparent translucency either subsumes or Pareto-improves Nashian results, and full transparency behaves like a singularity where Nashian results no longer apply, and the outcomes suddenly diverge from non-fully-transparent translucency. From a modal logic perspective, the difference lies in the presence of impossible possible worlds \citep{Rantala1982}  due to the strong assumption of epistemic omniscience in all possible worlds, used to model full transparency.} (for weaker concepts such as Halpern's and Pass's minimax-Rationalizability). The fully transparent approach is more orthogonal and sometimes coincides with a Nash equilibrium, sometimes Pareto-improves a Nash equilibrium, sometimes has no Pareto relationship with the Nash Equilibrium (some agents get more, some less).

\subsection{Common Counterfactual Belief of Rationality and Minimax Rationalizability (Non-Nashian, translucent)}

A classical assumption made in game theory is Common Belief (or Knowledge)\footnote{In epistemic logic, the difference is that a player can believe that an event holds even it does not. Knowledge is thus a stronger assumption than belief.} of Rationality: all players are rational, and believe (or know) that they all are, and believe that they all believe that they all are, and so on. In other words, common knowledge of rationality is assumed in the actual world.

In non-Nashian Game theory, rationality is also \emph{counterfactually} assumed. In this paper, we assume the strong criterion of Necessary Rationality: in any possible world, all players are rational. Necessary Rationality implies Common Knowledge of Rationality, but the converse is not true: Common Knowledge of Rationality does not imply that rationality also holds counterfactually, that is, that rationality would still hold if the agents made different decisions.

Joseph Halpern and Rafael Pass explicitly name---and formally define---the weaker assumption of Common Counterfactual Belief of Rationality (CCBR). Firstly, CCBR means that all players are rational. Secondly, it means that each one of them counterfactually believes that everyone \emph{else} is rational, which means that they believe that, even if an agent had acted differently, all \emph{other} agents would have been rational. This is where it is weaker: Necessary Rationality more strongly implies that, if an agent had acted differently, all agents \emph{including him/her} would have been rational. Finally CCBR also recursively means that all agents counterfactually believe that everyone \emph{else} counterfactually believes that everyone \emph{else} is rational, and so on.

As is done in this paper, CCBR does not assume that the opponents' strategies would be unchanged if a player unilaterally changed his strategy. Dropping this assumption leaves room for alternative assumptions, in their case CCBR. Halpern and Pass make no particular assumptions regarding Perfect Prediction, which we do in this paper.

Halpern and Pass define minimax rationalizability to select strategies that make sense under CCBR. The algorithmic  characterization of minimax rationalizability is by iterated deletion of strategies that are minimax dominated. Informally, in a game in normal form, a strategy of player $i$ is minimax dominated if there exists another strategy that guarantees him a higher payoff no matter what the opposite player does or would do. In other words, there is another strategy for which the minimum payoff is greater than the maximum payoff of the dominated strategy.

The intuition behind this definition is that, even if the player considers that the opponent's strategy is counterfactually dependent on his choice, a minimax-dominated strategy will never be a good choice as the payoffs will nevertheless always be less, no matter what the assumed counterfactual dependence is. If a player P picked a minimax-dominated strategy, even if the opponent's strategy is the best possible case for P, there is another strategy that would give him a higher payoff even if the (then possibly different) opponent's strategy were the worst possible case. This is to be put in contrast with ``classical'' rationalizability, in which	strategies that are not best responses are eliminated, which is a weaker requirement for elimination. In other words, a strategy that is minimax-dominated would also be eliminated according to classical rationalizability.

Figure \ref{fig-minimax-dominated} shows an example of a game in which some strategies are minimax-dominated and can thus be eliminated under CCBR. It can be seen in these games that fewer outcomes are eliminated than with classical rationalizability. Indeed, with A and D eliminated, E, which is minimax-rationalizable, is never a best response, which makes it non-rationalizable. If we drop unilateral deviations, then CE may still be reasonable if the column player believes that, if she had picked F (and this leaked), the row player would have picked B: $8>7$. This is something that cannot be captured with unilateral deviations in Causal Decision Theory.

\begin{figure}
\begin{center}

\resizebox{\textwidth}{!}{
\begin{tabular}{lcccr}

\begin{tabular}{|r|c|c|c|}
\hline
& D & E & F\\
\hline
A & 1,5& 2,3 & 4,1\\
\hline
B & 3,2 & 6,6 & 9,7\\
\hline
C & 5,4 & 7,8 & 8,9\\
\hline
\end{tabular}

&

$\Rightarrow$

&

\begin{tabular}{|r|c|c|c|}
\hline
& D & E & F\\
\hline
\cellcolor{black}\textcolor{white}A & \cellcolor{black}\textcolor{white}{1,5}& \cellcolor{black}\textcolor{white}{2,3} & \cellcolor{black}\textcolor{white}{4,1}\\
\hline
B & 3,2 & 6,6 & 9,7\\
\hline
C & 5,4 & 7,8 & 8,9\\
\hline
\end{tabular}

&

$\Rightarrow$

&

\begin{tabular}{|r|c|c|c|}
\hline
& \cellcolor{black}\textcolor{white}D & E & F\\
\hline
\cellcolor{black}\textcolor{white}A & \cellcolor{black}\textcolor{white}{1,5}&\cellcolor{black}\textcolor{white}{2,3} & \cellcolor{black}\textcolor{white}{4,1}\\
\hline
B & \cellcolor{black}\textcolor{white}{3,2} & \cellcolor{black!25}6,6 & 9,7\\
\hline
C & \cellcolor{black}\textcolor{white}{5,4} & \cellcolor{black!25}7,8 & \cellcolor{black!25}8,9\\
\hline
\end{tabular}

\end{tabular}
}
\end{center}
\caption{A 3x3 game in normal form, for which strategy A is minimax-dominated by both C, and (after eliminating C) D is minimax-dominated by either E or F. The iteration goes from left to right, and eliminated profiles are marked in black. A and D are not minimax-rationalizable, in other words, are not rational according to CCBR, even if each player considers the strategy of the other player not to be fixed. Strategy profiles marked in black are also not rationalizable. Those marked in light gray are minimax-rationalizable, but not rationalizable.}
\label{fig-minimax-dominated}
\end{figure}

\begin{definition}[minimax rationalizability]: Given a game $(P, \Sigma, u)$, given a player $i$, a strategy $\sigma_i$ is minimax-dominated\footnote{In the original paper \citep{Halpern:2013aa}, the opponent's strategy is taken from a subset of the opponent's strategies to account for successive eliminations. We are leaving out this aspect here for pedagogical reasons. Indeed, one can also mentally update $\Sigma$ in place as strategies get eliminated.} if

$$\exists \upsilon_i \in \Sigma_i, \min_{\tau_{-i}\in\Sigma_{-i}} u_i(\upsilon_i, \tau_{-i}) > \max_{\tau_{-i}\in\Sigma_{-i}} u_i(\sigma_i, \tau_{-i})$$

\end{definition}

Joseph Halpern and Rafael Pass give alternative characterizations of minimax-rationalizabi\-lity, but iterated deletion is the most intuitive one, and the one that we will use for our proof. The concept of minimax-rationalizability, defined on strategies, extends to strategy profiles, i.e., a strategy profile can be considered to be minimax-rationalizable if all the strategies it is made of are all minimax-rationalizable.

A strong feature of minimax-rationalizability is that the result is independent of the order in which strategies are eliminated. The cost to pay is that it is a weak criterion in the sense that for many games, most strategies are minimax-rationalizable. For example, the games shown in Figures \ref{fig-prisoner-dilemma}, \ref{fig-chicken-game} and \ref{fig-coordination-game} have no minimax-dominated strategies. The PTE is a strong criterion because it is at most unique. However, it requires that the outcomes are eliminated in successive steps, which means that there is an underlying structure in the (counterfactual) levels of logical omniscience as the reasoning progresses. This is formalized in details in \cite{Fourny2018b} where it is conjectured in particular that necessary logical omniscience, necessary rationality and necessary knowledge of strategies form an impossibility triangle. In the framework of perfect prediction, the logical reasoning needs to be bootstrapped starting with the entire set of outcomes, which is formally achieved with non-normal-worlds \citep{Rantala1982} in which anything is possible and nothing is necessary.

\begin{figure}
\begin{center}

\resizebox{\textwidth}{!}{
\begin{tabular}{lcccr}

\begin{tabular}{|r|c|c|c|}
\hline
& D & E & F \\
\hline
A & 1, 1& 2, 2 & 3, 4\\
\hline
B & 4, 5 & 6, 8 & 7, 9\\
\hline
C & 5, 6 & 8, 3 & 9, 7\\
\hline
\end{tabular}

&

$\Rightarrow$

&

\begin{tabular}{|r|c|c|c|}
\hline
& D & E & F \\
\hline
A & \cellcolor{black!25}1, 1& \cellcolor{black!25}2, 2 & \cellcolor{black!25}3, 4\\
\hline
B & \cellcolor{black!25}4, 5 & 6, 8 & 7, 9\\
\hline
C & 5, 6 & \cellcolor{black!25}8, 3 & 9, 7\\
\hline
\end{tabular}

&

$\Rightarrow$

\\

\\

$\Rightarrow$

\begin{tabular}{|r|c|c|c|}
\hline
& D & E & F \\
\hline
A & \cellcolor{black!25}1, 1& \cellcolor{black!25}2, 2 & \cellcolor{black!25}3, 4\\
\hline
B & \cellcolor{black!25}4, 5 & 6, 8 & 7, 9\\
\hline
C & \cellcolor{black!25}5, 6 & \cellcolor{black!25}8, 3 & \cellcolor{black!25}9, 7\\
\hline
\end{tabular}

&

$\Rightarrow$

&

\begin{tabular}{|r|c|c|c|}
\hline
& D & E & F \\
\hline
A & \cellcolor{black!25}1, 1& \cellcolor{black!25}2, 2 & \cellcolor{black!25}3, 4\\
\hline
B & \cellcolor{black!25}4, 5 & \cellcolor{black!25}6, 8 & 7, 9\\
\hline
C & \cellcolor{black!25}5, 6 & \cellcolor{black!25}8, 3 & \cellcolor{black!25}9, 7\\
\hline
\end{tabular}

\end{tabular}
}
\end{center}
\caption{An example of a game in which the PTE is not minimax-rationalizable. We show the iterated elimination of profiles towards the PTE.}
\label{fig-ccbr-counterexample}
\end{figure}

\begin{figure}
\begin{center}

\resizebox{\textwidth}{!}{
\begin{tabular}{lcccr}

\begin{tabular}{|r|c|c|c|}
\hline
& D & E & F \\
\hline
A & 1, 1& 2, 2 & 3, 4\\
\hline
B & 4, 5 & 6, 8 & 7, 9\\
\hline
C & 5, 6 & 8, 3 & 9, 7\\
\hline
\end{tabular}

&

$\Rightarrow$

&

\begin{tabular}{|r|c|c|c|}
\hline
& D & E & F \\
\hline
\cellcolor{black!25}A & \cellcolor{black!25}1, 1& \cellcolor{black!25}2, 2 & \cellcolor{black!25}3, 4\\
\hline
B & 4, 5 & 6, 8 & 7, 9\\
\hline
C & 5, 6 &8, 3 & 9, 7\\
\hline
\end{tabular}

&

$\Rightarrow$

\\

\\

$\Rightarrow$

\begin{tabular}{|r|c|c|c|}
\hline
& \cellcolor{black!25}D & E & F \\
\hline
\cellcolor{black!25}A & \cellcolor{black!25}1, 1& \cellcolor{black!25}2, 2 & \cellcolor{black!25}3, 4\\
\hline
B & \cellcolor{black!25}4, 5 & 6, 8 & 7, 9\\
\hline
C & \cellcolor{black!25}5, 6 & 8, 3 & 9, 7\\
\hline
\end{tabular}

&

$\Rightarrow$

&

\begin{tabular}{|r|c|c|c|}
\hline
&  \cellcolor{black!25}D & E & F \\
\hline
 \cellcolor{black!25}A & \cellcolor{black!25}1, 1& \cellcolor{black!25}2, 2 & \cellcolor{black!25}3, 4\\
\hline
 \cellcolor{black!25}B & \cellcolor{black!25}4, 5 & \cellcolor{black!25}6, 8 &  \cellcolor{black!25}7, 9\\
\hline
C & \cellcolor{black!25}5, 6 & 8, 3 & 9, 7\\
\hline
\end{tabular}

\\

\\

$\Rightarrow$

\begin{tabular}{|r|c|c|c|}
\hline
& \cellcolor{black!25}D & \cellcolor{black!25}E & F \\
\hline
\cellcolor{black!25}A & \cellcolor{black!25}1, 1& \cellcolor{black!25}2, 2 & \cellcolor{black!25}3, 4\\
\hline
\cellcolor{black!25}B & \cellcolor{black!25}4, 5 & \cellcolor{black!25}6, 8 & \cellcolor{black!25}7, 9\\
\hline
C & \cellcolor{black!25}5, 6 & \cellcolor{black!25}8, 3 & 9, 7\\
\hline
\end{tabular}

&

&

\end{tabular}
}
\end{center}
\caption{An example of a game in which the PTE is not minimax-rationalizable. We show the iterated elimination of strategies that are not minimax-rationalizable.}
\label{fig-ccbr-counterexample-2}
\end{figure}

A counterexample was obtained by running the computations on a large number of games. No counterexample was found on $2\times2$ games. Among the 204+ million $3\times3$ games randomly tried out\footnote{This is a sample of the entire space of $3\times3$ games.} in Section \ref{section-existence}, about 137 million had a PTE. Of another random sample of about 12 million games, a bit more than 8000 was not minimax-rationalizable, which is less than 0.1\%. This counterexample is shown in Figures \ref{fig-ccbr-counterexample} and \ref{fig-ccbr-counterexample-2}, with both solution concepts are shown. Part of the reason is that minimax rationalizability eliminates strategies (rows and columns), while necessary rationality and Necessary Knowledge of Strategies eliminate strategy profiles (single outcomes).

Minimax-rationalizability is based on iterated elimination of strategies, which is also indirectly iterated elimination of strategy profiles, but in entire row or column batches. While it is true that all strategy profiles that get eliminated in the first round in minimax-rationalizability are also eliminated in the first round in the PTE, the converse is not true and further rounds may start to deviate, converging to different final sets. Figure \ref{fig-ccbr-counterexample-2} gives the rounds of elimination of strategies that are not minimax-rationalizable for the same game. We can see that only C and F are minimax-rationalizable, leaving only CF under CCBR

A closer look shows that the order in which we eliminate strategy profiles in the successive rounds matters: a strategy profile that the PTE eliminates in the first round cannot be used for further preemption. It is thus important in the PTE that, at each round, all strategy profiles that are known to be inconsistent with Necessary Rationality and Necessary Knowledge of Strategies  must be eliminated, using all the knowledge on impossible profiles available from the previous round: otherwise, a non-eliminated strategy profile may mistakenly affect (upwards) the maximin utility, eliminating strategy profiles that should not be eliminated.

This shows that, under CCBR, which is the translucent case, there is some opacity: in the third round of elimination, in Figure \ref{fig-ccbr-counterexample-2}, B is eliminated because 6 and 7 are both below the surviving 8 and 9. However, in the case of Necessary Rationality and Necessary Knowledge of Strategies, the players commonly know in all possible worlds that CE, which is not individually rational, is not possible. Likewise, they commonly know in all possible worlds that, since AE is not individually rational either, then in any possible world in which the column player picks E, the row player picks B. With this knowledge, obtained after two rounds of reasoning under Necessary Rationality and Necessary Knowledge of Strategies, CF is eliminated as well, as the column player would deviate to E to get 8 instead of 7. Strategy C thus is never picked in any possible world, and cannot be used to argue that strategy B is not possible.

This means that there is some singularity between a translucent setting and a fully transparent setting in which extra knowledge can be used to eliminate single profiles rather than strategies as a whole. This singularity does not exist for symmetric games, as we will see in Section \ref{section-symmetric-games}. Indeed, on symmetric games, any PTE is always minimax-rationalizable. Halpern and Pass also include a discussion of individual rationality of strategy profiles, as also introduced in Section \ref{section-individual-rationality}, which shows that it is relevant not only in the Nashian paradigm as known from the Folk theorems but also to non-Nashian reasoning.

Minimax-rationalizability and individual rationality, for strategy profiles, are not subsuming each other in any way: an individually rational strategy profile may not survive iterated minimax-deletion, and not all strategy profiles that survive it are individually rational. As Halpern and Pass point out, an individually rational strategy profile will always survive the first round of minimax elimination but may get eliminated in the second. The intuitive reason is that, after a round of elimination, this strategy profile may ``lose'' its individual rationality because the elimination of some strategies can increase the threshold required for individual rationality.

\subsection{Perfect Prediction Equilibrium  (Non-Nashian, transparent)}

The counterpart of the PTE on extensive form with perfect information was published by \citet{Fourny2018}, with the same assumptions of Necessary Knowledge of Strategies and Necessary Rationality (even though the terminology in this older paper may be different). It is also known as the Projected Equilibrium, which was the initial name used by Jean-Pierre Dupuy. The PPE is the natural counterpart of the PTE for games in extensive form. It has many features in common. First, it is based on an iterated elimination of preempted outcomes. Preemption is also done using the minimum guaranteed payoff by the strategy or move used to deviate. Like the PTE, the PPE is unique and Pareto-optimal.

However, the PPE differs from the PTE in that it always exists, and must solve Grandfather's paradoxes. In particular, given a game in extensive form and its PPE, if we convert the game to a normal form as is done in Nashian game theory, the PTE of this converted normal form will not always match the PPE. This is because the normal form does not carry the causal dependencies of the extensive forms. While these causal dependencies are not needed in the Nash equilibrium (because the past is counterfactually independent of the future), they are paramount in Perfect Prediction settings: the consistency of the game timeline must be preserved and the successive choices of the players must cause\footnote{in the sense that the former do not preempt the latter. In relativistic terms, we think of causality in the sense of inclusion in the future light cone.} each other: a decision cannot be made at a node $n$ if the player playing at the parent node did not pick $n$.

\subsection{Superrationality (Non-Nashian, transparent)}

\citet{Hofstadter1983} defined superrational equilibria on symmetric games in normal form. Superrationality is discussed in Section \ref{section-superrationality} on symmetric games.

\subsection{Shiffrin's Joint-Selfish-Rational equilibrium (Non-Nashian, translucent)}

\citet{Shiffrin2009} suggest an alternative approach for the discovery of Pareto optima in extensive form games, the Joint-Selfish-Rational equilibrium (JSRE). The approach differs from the PPE \citep{Fourny2018} and PTE (this paper), in that outcomes that have been eliminated can still be considered as deviations in subsequent rounds. The JSRE reasoning starts with the Subgame-Perfect Equilibrium, and navigates up and down the tree finding Pareto optimizations of successive interim equilibria. This leads to the same solution as the PPE in many games, but diverges from the PPE on other games. The JSRE in its original form is explained on several examples and has been formalized in pseudo-code by Felipe \cite{Sulser2019} in the general case. It has an exponential complexity to calculate as the number of players increases. The JSRE is currently undergoing a redesign by Rich Shiffrin to account for use cases in which the predicted equilibrium was not fully satisfactory in practice.

We believe that the divergence between the PPE and the JSRE paradigm is due to a fundamental axiomatic disagreement, in that the PPE reasoning (like the PTE reasoning) is based on the absence of contingencies, due to the uniqueness of the equilibrium, and on reasonings only on the equilibrium path. The JSRE uses counterfactual implications outside the equilibrium path as the SPE does, which we believe make it prone to the Backward Induction Paradox.

\subsection{Program Equilibrium (Non-Nashian, translucent)}

\cite{Tennenholtz2004} introduces the concept of an equilibrium in normal form, the Program Equilibrium, in which the players provide computer programs instead of strategies. The programs can read each other's source code, which means that the computations are transparent to each other. This is increasingly relevant in the context of smart contracts and, for example, the Ethereum blockchain, where smart contracts can look at each other. They show that any individual outcome can be obtained with this setup.

A crucial difference with the PTE is that, in the Program Equilibrium, the programs can read each other and deviate to a punishment in case they are not identical, however, there is no concept to enforce that the programs \emph{would also have been the same} if they had picked a different strategy. In the PTE, there is Necessary Rationality and Necessary Knowledge of Strategies, which means that the programs should be transparent to each other not only in the actual world but in all possible worlds. For any game, the Program Equilibria are the individually rational outcomes, which are also known to be obtained as steady states of repeated games (Folk theorem).

It remains an open avenue of research how the Program Equilibrium can be adapted to give a setup that exactly matches the transparency assumptions described in this paper.

\subsection{Second-Order Nash Equilibrium (Non-Nashian, translucent)}

\cite{bilo2011} introduces the concept of Second-Order Nash Equilibrium. Like the PTE, they are interested in one-shot games as opposed to repeated games. They extend the set of equilibria to a superset of Nash equilibria, making it a bigger set of candidate outcomes and a weaker condition, as are individual rationality, rationalizability or minimax-rationalizability.

Put simply, Second-Order Nash Equilibria differ from the PTE in that deviations are made sequentially in the actual world, as opposed to counterfactually as hypothetical alternatives. Indeed, while Second-Order Nash Equilibria model the consequences of a deviation of strategy in terms of several improving steps leading eventually to a Nash equilibrium that may or may not improve an agent's payoff, the PTE formally models deviations of strategies into a recursive structure of Lewisian closest-state functions, as is also done for minimax-rationalizability, but with the players being rational and making correct predictions in all states. 

The PTE, as opposed to Second-Order Nash Equilibra, shows that in a fully transparent setting, the laws of logic dictate that at most one outcome can be reached. It is thus a stronger condition. There is no known relation of inclusion between the two concepts as of today, and we suspect that there might be counter-examples.

\subsection{Stalnaker-Bonanno equilibrium (Nashian)}

The PTE distinguishes itself from other solution concepts such as minimax-rationalizability, in that it iteratively eliminates single strategy profiles rather than entire strategies. As it turns out, there are other examples of solution concepts that eliminate individual profiles, including in Nashian literature. A prominent example was originally formulated by \citet{Stalnaker1994} in the presence of Common Belief of Rationality as well as the assumption that what is believed is actually true (but not recursively). This leads to a slightly stronger definition of rationality, in which it is considered irrational, given a specific opponent's decision $\sigma_{-i}$, to play a strategy $\sigma_i$ that is weakly dominated by another strategy (smaller-or-equal payoffs), and such that for that one specific opponent's decision, the inequality is strict. This eliminates specifically the non-optimal profile ($\sigma_{i}, \sigma_{-i})$, but not the entire strategy $\sigma_i$. The process is iterated until it converges to a set of remaining profiles. \citet{Bonanno2008} gave a syntactic characterization of this solution concept to complement Stalnaker's work.

This equilibrium concept differs from the PTE, because payoff comparisons are done by fixing the opponent's strategy (weak domination of strategies), which is the Nashian free choice assumption where decisions are taken independently of each other.

\subsection{Perfect Cooperative Equilibrium (Non-Nashian, translucent)}

The Perfect Cooperative Equilibrium (PCE) was introduced by \citet{Rong2014} to address social dilemmas. It is defined on games in normal form for any number of players. It is based on a ``maximax'' approach in that one computes the best possible payoff, under the constraint that the opponents are best-responding. All strategy profiles that Pareto-dominate the obtained payoffs are PCE. A PCE always Pareto-dominates all Nash equilibria.

The Perfect Cooperative Equilibrium is designed in the context of repeated games in which players ``best respond to what they have learned'', i.e., the idea is that the two agents can converge to a Perfect Cooperative Equilibrium rather than to a Nash Equilibrium.

\citet{Rong2014} also introduce variants such as the M-PCE, where the threshold above which a PCE is obtained is offset by the same, maximum possible amount -- possible negative -- for all players until exactly one remains. Cooperative Equilibria (CE) are also defined for two-player games with a slightly weaker condition than the PCE.

It is worth noting that, in the case of a zero-sum game with two players, the PCE coincides with individual rationality, because maximizing the opponent's payoff (anticipating their best response) is identical to minimizing one's own payoff (considering the worst-case scenario).

\subsection{Translucent equilibrium (Non-Nashian, translucent)}

The translucent equilibrium was introduced in \citet{Capraro2015} as a weak solution concept that captures translucency. Translucent equilibria are algorithmically obtained, for pure strategies\footnote{Capraro and Halpern actually define it also on mixed strategies.}, as those that Pareto-dominate the tuple of second-lowest ``minimin'', in a single round of elimination. In other words, an agent considers the worst-case scenario for each one of his strategies. She then looks at the obtained payoffs and looks at the second-lowest. Any strategy profiles that yield a payoff below this threshold to this agent are eliminated.

This is thus similar to individual rationality, but with the second-lowest minimum rather than the maximum. For this reason, in a pure strategy setting, an individually rational strategy profile is also a translucent equilibrium because the threshold is stricter. A Nash Equilibrium is always translucent, and the PTE is also always translucent, because it is always individually rational. The Translucent equilibrium is thus the lowest known common denominator between the PTE and the Nash equilibrium.

\subsection{Correlated equilibrium (Nashian)}

A common question asked about the Perfectly Transparent Equilibrium, where the agents' decisions may be counterfactually dependent on each other, is how this relates to the correlated equilibrium.

A correlated equilibrium \citep{Aumann1974} \citep{Aumann1987} is a generalization (superset) of the Nash equilibrium. The agents receive, in advance, signals from a source, and these signals may be correlated. In practice, a signal often consists of a strategy profile that serves as a synchronization mechanism to ``agree'' on a specific Nash equilibrium.

A correlated equilibrium consists in a probability distribution on possible worlds\footnote{Please see Section \ref{section-possible-worlds} for a beginner's introduction to possible worlds, accessibility relations, the modeling of knowledge or belief, and partitions as a special case of accessibility relation when the relation is transitive, reflexive and symmetric.} (actually called states by the community), a partition of this set of possible worlds for each player that models their knowledge, and an assignment of a choice of strategy for each possible partition. It is an equilibrium if the expected utility is more than if the agents unilaterally modified their assignment of strategies to partitions.

From this definition, it is straightforward that correlated equilibria are part of the Nashian paradigm: changes of the assignments of strategies to partitions are done unilaterally, i.e., in spite of the correlation in the state of nature (which induces a correlation in the decisions, seen as random variables), the assignments of strategies are chosen independently of each other. Correlated equilibrium can be used to fine-tune cases in which there may exist several Nash equilibrium (e.g., in the Battle of the Sexes game), as the signal can be used to synchronize on one of the Nash equilibria.

\subsection{Quantum games (Nashian)}
\label{section-quantum-games}

In this paper, we are interested in games in normal form and look at pure strategies, i.e., players pick a single strategy. The Nash paradigm also allows mixed strategies, in which players may instead pick a probability distribution over their sets of strategies. In correlated equilibria, these probability distributions may be built on top of (possibly correlated) signals received by all agents before the game, which allows synchronization.

Quantum games \cite{Meyer1999} \cite{Benjamin2001} are a further extension this paradigm, in which players receive the signal from nature as a quantum state, and can also send their choice of strategy as quantum states (qubits) rather than classical states (i.e., classical probability distributions over sequences of bits for mixed strategies). There is a considerable body of literature on Bell inequalities \cite{Bell1964}\cite{Colbeck2017} that shows that the expressive power of qubits is strictly greater than that of classical bits, because of entanglements in the received signals (also quantum states) that allow breaking constraints that limit what classical bits can do.

The choices of strategies in quantum games by the agents correspond on the physical level to the application of unitary (reversible) transformations to their input quantum states, to the free choice of measurement axes, and to carrying out the corresponding measurements. Again, these choices, especially picking measurement axes, are again governed in this paradigm by the assumption that they are made fully independently from anything that could not have been caused by them\footnote{This elegant formal formulation is by \citep{Renner2011}.}.

We are actively investigating an alternative avenue of research, available publicly as pre-prints, in which we drop this assumption of independence by applying NNDT to quantum theory. For this, we generalized the PTE to decisions made in special-relativistic spacetime, which are shown to be expressible as games in extensive form with imperfect information \citep{Fourny2019a}. We then reformulated measurements and experiments as games played across spacetime between agents (physicists) and nature, which also maximizes its utility (this is known as the Principle of Least Action) \citep{Fourny2019b}. This leads to completely deterministic models based on pure strategies that offer the potential to extend quantum theory to a fully deterministic (and falsifiable) theory without being constrained by the Bell inequalities and without contradicting theoretical impossibility theorems\footnote{These theorems all rely on Nashian models of decision making. The most stringent such impossibility theorem currently known to us, by \cite{Renner2011} shows that quantum theory is maximally informative (and thus nature is inherently random) for agents endowed with free choice in the Nash sense (unilateral deviations).}, with the long-term goal of yielding concrete experimental protocols and possibly more powerful, fixed-point-based computational models \citep{Aaronson2008}. Because we use NNDT and not CDT, this approach is different from that taken in quantum games, which remains Nashian in nature as far as the agents' decisions on measurement axes are concerned.

\subsection{Welfare economics and social utility}
\label{section-social-welfare}

Social welfare functions \citep{Sen2018} are used to quantify the desirability of a society, which introduces a moral dimension to rational behavior. A particular category of functions take individual agent utilities and quantify how efficient and well-distributed their utilities are. Some examples of such functions are: the minimum utility, the sum or average of utilities, as well as the more elaborate \cite{Foster1983}'s functions, which also considers how well utilities are distributed. To this goal, the Theil L index \citep{Theil1996} is used to measure inequality, which is zero when all agents have the same utilities and highest when a single agent has everything; namely, for a specific game outcome $\sigma$:

$$T_L(\sigma)=\frac{1}{N} \sum_{i\in P} \log \frac{ \frac{1}{|P|} \sum_{j\in P} u_j(\sigma) }{u_i(\sigma)}$$

Foster's social welfare function is then given, for this same outcome $\sigma$, with:

$$W_{Foster}(\sigma)= \frac{1}{|P|} \sum_{i\in P} u_i(\sigma) \times e^{-T_L(\sigma)}$$

\begin{figure}
\centering{
\resizebox{0.7\textwidth}{!}{
\includegraphics{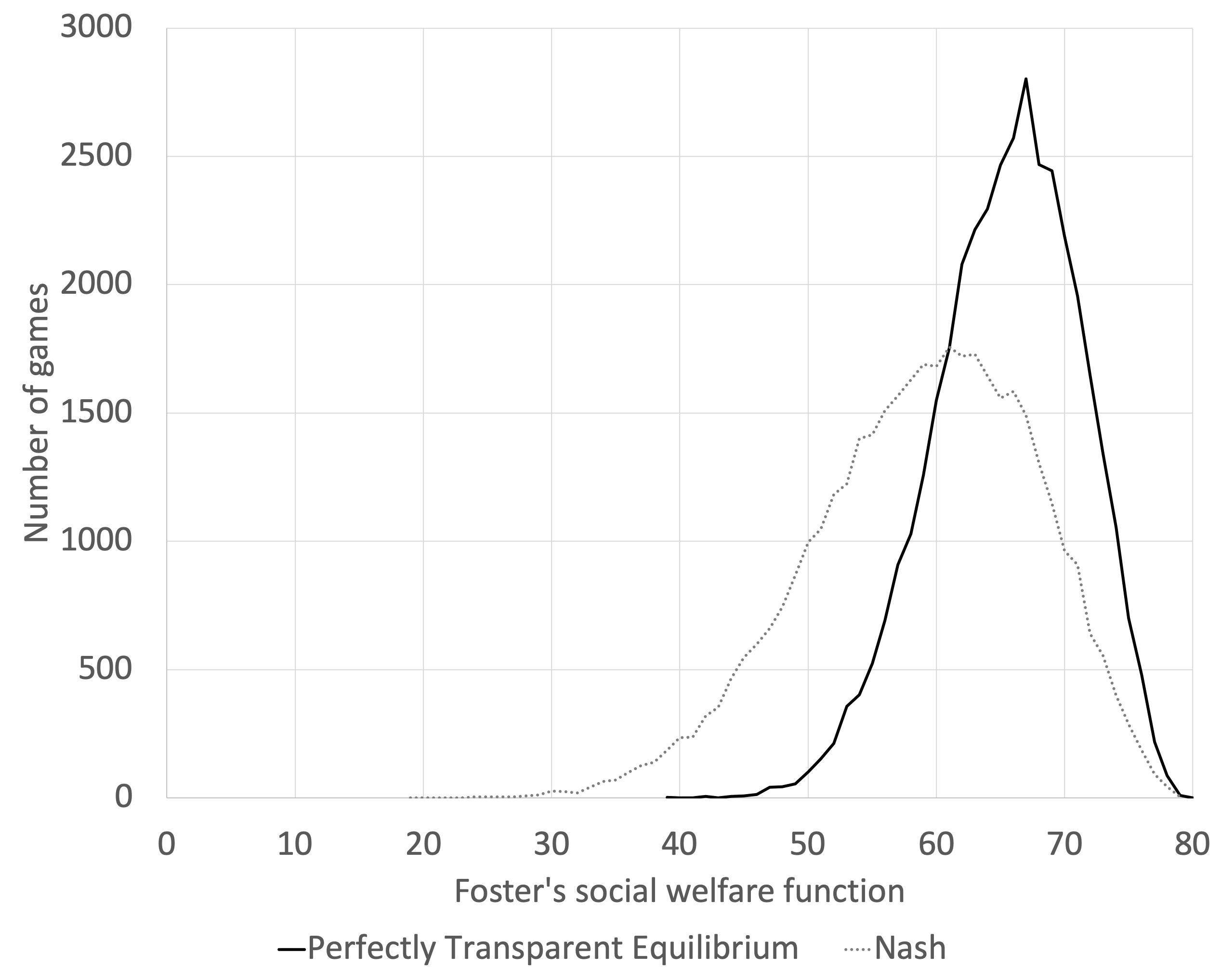}
}
}
\caption{Distribution of social welfare on randomly generated games involving 4 agents having each 3 strategies (ordinal preferences are randomly permuted), for both the Nash equilibrium (when it is unique) and the Perfectly Transparent Equilibrium. We use Foster's function with the Theil L index}
\label{fig-social-welfare}
\end{figure}

The Pareto optimality of the PTE is an indicator of its efficiency, but it does not take into account how well the payoffs are distributed. In order to understand how the PTE performs at the latter, we randomly generated games with 4 agents having each 3 strategies and computed the distribution of the values of Foster's social welfare function on the Perfectly Transparent Equilibrium when it exists on the one hand, and on the Nash equilibrium when it exists and is unique on the other hand. The resulting distributions are shown in Figure \ref{fig-social-welfare}.

\subsection{Evolutionary Game Theory}

Another line of research aiming at understanding altruistic behavior in nature is evolutionary game theory \citep{Sigmund2011}. The main difference with classical game theory is that agents were born with their strategies and transmit them to their offspring. At each iteration, the strategies are applied to obtain payoffs, which in turn affect how much each agent will reproduce.

The key difference with the PTE, and more generally the NNDT approach, is that evolutionary game theory is based on repeated games with a causal flow, while our approach is based on a one-off play. For games in normal form, in the non-Nashian reasoning, the strategies are thus counterfactually interdependent, but causally independent. The consequence of this difference is that evolutionary stable states are those that are individually rational (Folk theorem). This broader scope makes evolutionary game theory a good descriptive theory, while the counterfactuals-based reasoning structure of the PTE uniquely identifies at most a specific outcome. Today, the descriptive value of the PTE is thus inherently more limited than that of evolutionary game theory.

\subsection{Berge Equilibrium (Nashian)}

\cite{Berge1957} introduced a solution concept where agents are altruistic: rather than seeking to increase their utility, they seek to increase the other agents' utility. A follow-up on its properties was made by \cite{Colman2011}. Whereas in a Nash equilibrium, each agent best replies to the other players' (fixed) strategies, in a Berge Equilibrium, given the fixed strategy of an agent, a deviation of the \emph{other} agents' strategies never increases this agent's utility.

If the Berge equilibrium is also individually rational, it is called a Berge-Vaisman equilibrium. Since this is not always the case, it follows that a Berge equilibrium is not always a PTE. Furthermore, there may be multiple Berge equilibria. The Berge equilibrium assumes altruistic behavior, while the PTE assumes selfish behavior; the assumption of Perfect Prediction acts as an invisible hand that leads to Pareto optimality and higher social welfare than Nash equilibria in the long-term.

\section{The special case of symmetric games}
\label{section-symmetric-games}

There is a subclass of games that is of particular relevance, because players are interchangeable: symmetric games. Symmetric games are the category of games for which the earliest accounts of non-Nashian reasonings are found in the literature \cite{Hofstadter1983}. They are thus of particular importance for comparisons with more recent non-Nashian results such as minimax-rationalizability and the PTE, but also with the older concept of individual rationality (Folk theorem).

\subsection{Symmetric games}

In a symmetric game, the strategy spaces are identical for all players, and the payoffs are defined in such a way that the game is invariant through a permutation of players.

\begin{definition}[symmetric game] A game is symmetric if

\begin{itemize}

\item the strategy spaces are identical

$$\forall i, j \in P, \Sigma_i = \Sigma_j = \Upsilon$$

\item the payoffs are symmetric

$$\forall \pi \in Sym[P], \forall \sigma \in \Sigma, u_i(\sigma) = u_{\pi(i)}(\sigma_{\pi(.)})$$

where $Sym[P]$ is the permutation group on P and $\sigma_{\pi(.)}$ denotes $(\sigma_{\pi(1)},$ $\sigma_{\pi(2)}, ..., \sigma_{\pi(|P|)})$.

\end{itemize}

\end{definition}

\begin{figure}
\begin{center}
\begin{tabular}{|r|c|c|}
\hline
& Straight & Swerve\\
\hline
Straight & 0, 0 &\cellcolor{black!75}\textcolor{white}{3, 1}\\
\hline
Swerve & \cellcolor{black!75}\textcolor{white}{1, 3} & \cellcolor{black!25}2, 2\\
\hline
\end{tabular}
\end{center}
\caption{The chicken game. A player can either stay straight or swerve. If both swerve, they get more (aka lose less) than if they both stay straight, and a player who unilaterally goes straight gets more payoff than if both swerve. The difference with the prisoner's dilemma, however, is that the "betrayed" player has interest in not reciprocating the betrayal (0 and 1 are swapped). This game has two Nash equilibria: when players make opposite decisions.The individually rational outcomes are all those that Pareto-dominate the maximin tuple (1,1), that is, all but Straight-Straight.}
\label{fig-chicken-game}
\end{figure}

\begin{figure}
\begin{center}
\begin{tabular}{|r|c|c|}
\hline
& Sushi & Pizza\\
\hline
Sushi &  \cellcolor{black!75}\textcolor{white}{1, 1}&  \cellcolor{black!25}0,0\\
\hline
Pizza &  \cellcolor{black!25}0,0 &  \cellcolor{black!75}\textcolor{white}{2,2}\\
\hline
\end{tabular}
\end{center}
\caption{The coordination game. In this game, the players have a mutual interest to pick the same strategy, even though one of the two strategies is better for both of them (aligned interest). All diagonal outcomes are Nash equilibria, i.e., players will not deviate if they made the same decision.The individually rational outcomes are all those that Pareto-dominate the maximin tuple (0,0), that is all of them.}
\label{fig-coordination-game}
\end{figure}

\begin{figure}
\begin{center}
\begin{tabular}{|r|c|c|c|}
\hline
& A & B & C\\
\hline
A & \cellcolor{black!75}\textcolor{white}{9, 9}& \cellcolor{black!25}8,6 & 5,1\\
\hline
B &\cellcolor{black!25} 6,8 & \cellcolor{black!25}7,7 & 4,2\\
\hline
C & 1,5 & 2,4 & 3,3\\
\hline
\end{tabular}
\end{center}
\caption{A larger game. The Nash equilibrium is AD, i.e., (9,9). The individually rational outcomes are all those that Pareto-dominate the maximin tuple (5,5). Indeed, strategy A guarantees for both players a minimum payoff of 5 regardless of what the other does.}
\label{fig-minimax-individual}
\end{figure}

The prisoner dilemma (Figure \ref{fig-prisoner-dilemma}) is the most prominent symmetric game, found in almost any textbook of game theory. Other prominent symmetric game examples are the Chicken game (Figure \ref{fig-chicken-game}), the Coordination game (Figure \ref{fig-coordination-game}). Figure \ref{fig-minimax-individual} finally shows an example of a 3x3 game in general position. On all these figures, the Nash equilibrium and individually rational outcomes (which include Nash equilibria) are shown in black resp. gray.

\subsection{Superrationality}
\label{section-superrationality}

Superrationality was introduced by Douglas Hofstadter in 1983 for symmetric strategic games in a Scientific American column. Hofstadter's argument was made in the introduction (Section \ref{section-superrational-thinking}), where we quoted one of his most brilliant explanations. Douglas Hofstadter directly put in question the fundamental assumption behind Nash equilibria that players consider their decisions to be counterfactually independent of other players' decisions. Superrational players consider that their reasonings are interdependent, not because of any causal effect or any kind of retrocausality, but because their reasonings and conclusions are identical. As argued in Section \ref{section-newcomb}, an agent's decision may not be mere evidence of the other agent's decision (this is what Evidential Decision Theory argues): there can be an actual counterfactual implication between the agents' decisions under a weakened free choice assumption (a form of reasoning that we called Non-Nashian Decision Theory).

The games described in Hofstadter's column are all symmetric, which is a requirement for the reasonings to be identical. Identical reasonings and conclusions mean that only outcomes on the diagonal of the normal form are considered. An equilibrium is reached if, among all outcomes of the diagonal, it leads to the highest payoffs (which does not depend on the player as the game is symmetric).\footnote{Note that the original paper by Douglas Hofstadter defines the equilibrium reached by superrational players for variants of the prisoner's dilemma. We are taking the liberty of formally extending the reasoning on any symmetric games. However, it has to be said that we are unsure of what Douglas Hofstadter would think of games with ties on the diagonal, as there would not be a unique solution.}

Formally, this is expressed as follows.

\begin{definition}[Hofstadter equilibrium] Given a symmetric game $\Gamma=(P, (\Sigma_i)_i, (u_i)_i)$ in normal form, a strategy profile $\overrightarrow\sigma$ is an equilibrium reached by Superrational players (a Hofstadter equilibrium) if:

\begin{itemize}

\item the strategy profile is on the diagonal:

$$\exists\upsilon\in\Upsilon, \overrightarrow\sigma=(\upsilon, \upsilon, ..., \upsilon)$$

which we can also express as

$$\overrightarrow\sigma\in diag(\Sigma)$$

\item it maximizes the payoff on the diagonal

$$\forall \overrightarrow\tau \in diag(\Sigma), \forall i \in P, u_i(\overrightarrow\sigma) \ge u_i(\overrightarrow\tau)$$

\end{itemize}

\end{definition}

We are unsure of how Hofstadter would describe Superrational behavior in games with ties, as several equilibria may emerge, contradicting the premise of the reasoning. However, in this paper, we assume that games have no ties so that the Hofstadter equilibrium exists and is unique.

Figures \ref{fig-prisoner-dilemma-hofstadter}, \ref{fig-chicken-game-hofstadter} and \ref{fig-coordination-game-hofstadter} show the Hofstadter equilibria for our example games.

\begin{figure}
\begin{center}
\begin{tabular}{|r|c|c|}
\hline
& Defect & Cooperate\\
\hline
Defect & 1, 1 & 3, 0\\
\hline
Cooperate &  0, 3 & \cellcolor{black!25}2, 2\\
\hline
\end{tabular}
\end{center}
\caption{The prisoner's dilemma. Superrational players either both cooperate or both deviate. In a Hofstadter equilibrium, players both cooperate.}
\label{fig-prisoner-dilemma-hofstadter}
\end{figure}

\begin{figure}
\begin{center}
\begin{tabular}{|r|c|c|}
\hline
& Straight & Swerve\\
\hline
Straight & 0, 0 & 3, 1\\
\hline
Swerve & 1, 3 & \cellcolor{black!25}2, 2\\
\hline
\end{tabular}
\end{center}
\caption{The chicken game. Superrational players either both stay straight or swerve. In a Hofstadter equilibrium, players both swerve.}
\label{fig-chicken-game-hofstadter}
\end{figure}

\begin{figure}
\begin{center}
\begin{tabular}{|r|c|c|}
\hline
& Sushi & Pizza\\
\hline
Sushi & 1, 1& 0,0\\
\hline
Pizza & 0,0 & \cellcolor{black!25}2,2\\
\hline
\end{tabular}
\end{center}
\caption{The coordination game. Superrational players either both pick Sushi or Pizza. In a Hofstadter equilibrium, players both pick Pizza.}
\label{fig-coordination-game-hofstadter}
\end{figure}

An alternative formalization of Hofstadter's superrationality on symmetric games is given by \cite{Thome2019}, called superrational types. Types model beliefs that agents have on the types and decisions of other agents. Superrational types correspond to cases where the type and action of any agent are perfectly correlated with the types and actions the agent believes other agents have\footnote{The authors also analyze the case where agents may have different type spaces, in which case identifications are made to fall back to symmetric type spaces.}. In this respect, to the best of our understanding, this approach is explicitly Bayesian, does not rely on counterfactuals, and is an EDT account of superrationality, whereas our approach is Non-Nashian and based on (possibly non-symmetric) probabilities of subjunctive conditionals (counterfactuals). This paper shows that an NNDT approach is key to extending superrationality to non-symmetric games.

\subsection{Inclusion theorems specific to symmetric games}

We now turn to inclusion theorems involving Hofstadter's Superrationality, minimax-rationalizability, individual rationality, and the Perfectly Transparent Equilibrium. These theorems are proven here, and the proofs are relatively succinct. These results were also confirmed experimentally on a very large quantity of games of various sizes by Felipe \cite{Sulser2019}, with the datasets publicly available online, including the game configurations annotated with their resolution to various equilibrium concepts.

We give the inclusion theorems in this order, which is summarized in Figure \ref{fig-venn}.

\begin{itemize}
\item PTE $\subset$ Hofstadter equilibrium
\item Hofstadter equilibrium $\subset$ minimax-rationalizability
\item Hofstadter equilibrium $\subset$ individual rationality
\end{itemize}

Interestingly, it follows by transitivity that, on symmetric games, the PTE, when it exists, is always minimax-rationalizable. This is not true in general for asymmetric games, as we will see in Section \ref{section-counterexamples}

\begin{figure}
\centering{
\resizebox{0.7\textwidth}{!}{
\includegraphics{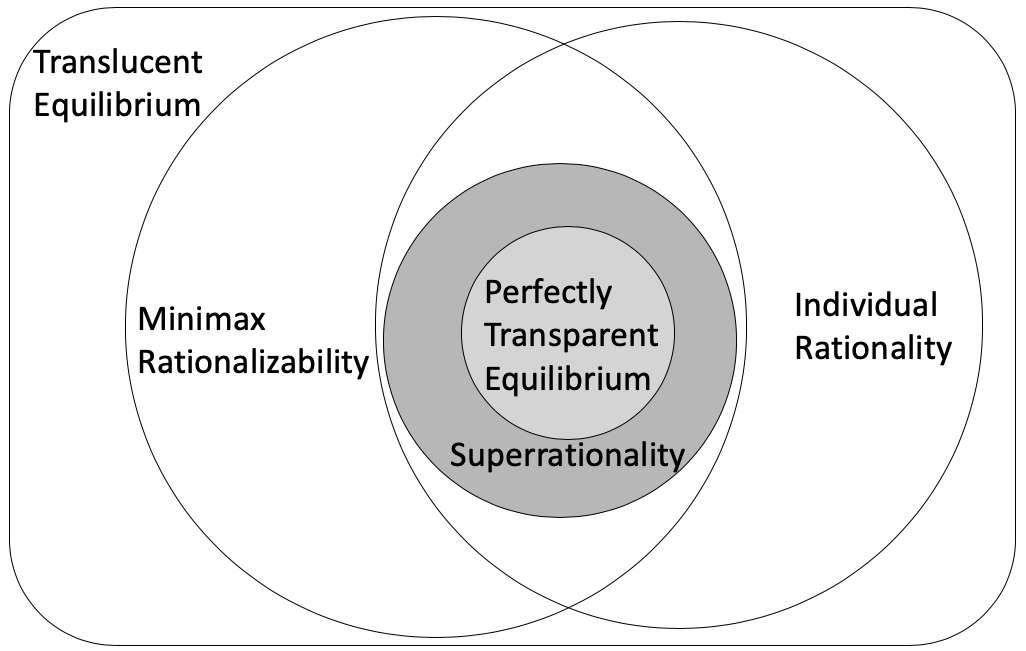}
}
}
\caption{A diagram depicting the relationship between the Translucent Equilibrium, Minimax Rationalizability, Individual Rationality, Superrationality and the Perfectly Transparent Equilibrium for strategy profiles on symmetric games.}
\label{fig-venn}
\end{figure}

\begin{theorem}[PTE $\subset$ Hofstadter equilibrium]
Given a symmetric game in normal form, with pure strategies, and with no ties, if the Perfect Prediction Equilibrium exists, then it is identical to Hofstadter's equilibrium.
\end{theorem}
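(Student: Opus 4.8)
The plan is to exploit three facts already in hand: the uniqueness of the PTE, its Pareto-optimality amongst all strategy profiles, and the symmetry of the game. The strategy is to first show that the entire elimination procedure is equivariant under the permutation group $Sym[P]$, so that the surviving set $\mathcal{S}(\Gamma)$ is permutation-invariant; this pins the PTE (when it exists) onto the diagonal. A short Pareto argument then identifies it with the Hofstadter equilibrium.

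Concretely, first I would prove by induction on $k$ that each surviving set $\mathcal{S}_k(\Gamma)$ is closed under the action $\overrightarrow\sigma \mapsto \sigma_{\pi(\cdot)}$ of any $\pi \in Sym[P]$. The base case $\mathcal{S}_0(\Gamma)$ (all profiles) is trivially invariant. For the inductive step, the key observation is that the round-$k$ maximin utility of player $i$, computed by restricting the inner minimum and outer maximum to profiles surviving in $\mathcal{S}_{k-1}(\Gamma)$, transforms correctly: using the payoff symmetry $u_i(\overrightarrow\sigma) = u_{\pi(i)}(\sigma_{\pi(\cdot)})$ together with the inductive invariance of $\mathcal{S}_{k-1}(\Gamma)$, the maximin of player $i$ equals the maximin of player $\pi(i)$, so in fact all players share a common round-$k$ threshold. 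Since a profile survives round $k$ precisely when every player's payoff meets their maximin threshold, and both the payoffs and the thresholds permute consistently, $\mathcal{S}_k(\Gamma)$ is invariant. Passing to the limit, $\mathcal{S}(\Gamma)$ is permutation-invariant as well.

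With equivariance established, suppose a PTE exists. By the Uniqueness theorem $\mathcal{S}(\Gamma) = \{\overrightarrow\sigma\}$ is a singleton, and invariance forces $\sigma_{\pi(\cdot)} = \overrightarrow\sigma$ for every $\pi \in Sym[P]$; a profile fixed by every permutation has all components equal, so $\overrightarrow\sigma \in diag(\Sigma)$, say $\overrightarrow\sigma = (\upsilon, \ldots, \upsilon)$. To finish, let $h = (\upsilon^*, \ldots, \upsilon^*)$ be the Hofstadter equilibrium, i.e.\ the diagonal profile of maximal payoff. By symmetry both $\overrightarrow\sigma$ and $h$ give every player the same payoff, $V$ and $H$ respectively, with $V \le H$ since $\overrightarrow\sigma$ lies on the diagonal. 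If $V < H$, then $h$ would give strictly more to every player, Pareto-dominating $\overrightarrow\sigma$ and contradicting its Pareto-optimality; hence $V = H$. Because there are no ties, distinct diagonal profiles yield distinct payoffs, so $V = H$ forces $\overrightarrow\sigma = h$.

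I expect the only real obstacle to be the bookkeeping in the inductive step: verifying that the restricted maximin utility is genuinely $Sym[P]$-equivariant requires tracking how the permutation acts simultaneously on the outer strategy, on the inner opponent profiles $\tau_{-i}$, and on the surviving set $\mathcal{S}_{k-1}(\Gamma)$, and checking that the induced bijection preserves both the inner minima and the outer maximum. Everything downstream---the singleton-forces-diagonal step and the Pareto comparison---is then immediate.
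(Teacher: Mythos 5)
Your proof is correct and takes essentially the same route as the paper's: symmetry plus uniqueness pins the PTE to the diagonal, and Pareto-optimality then identifies it with the maximal diagonal profile, i.e.\ the Hofstadter equilibrium. The only difference is one of rigor---your inductive equivariance argument on the sets $\mathcal{S}_k(\Gamma)$ is a careful justification of the step the paper dispatches in a single sentence (``if the PTE were not on the diagonal, one would obtain another distinct PTE by swapping the role of the players'').
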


\begin{proof}[PTE $\subset$ Hofstadter equilibrium]
Assume the PTE exists for a given game. In a symmetric game, the PTE must be on the diagonal. This is because, if the PTE were not on the diagonal, one would obtain another distinct PTE by swapping the role of the players. This would contradict uniqueness. Thus, the PTE lies on the diagonal. Since it must be Pareto-optimal, it must correspond to the maximum diagonal payoff and coincide with the Hofstadter Equilibrium, as non-maximum diagonal payoffs are Pareto-dominated by the Hofstadter Equilibrium. $\square$
\end{proof}

Even though it reaches the same conclusion on symmetric games, the PTE tells a different story than Superrationality in its original form. Even though non-diagonal outcomes are all eliminated (not because of their ``non-diagonalness'', but for other arguments), the decision remains a strategic decision: cooperate or defect in the prisoner's dilemma. Given a decision (in some given possible world), the payoff obtained must be compared to the counterfactual payoffs obtained if the other decision had been made (which it could, counterfactually). This counterfactual structure leads to the conclusion that cooperating is the rational choice ($2>1$) in the prisoner's dilemma.

\begin{theorem}[Hofstadter equilibrium $\subset$ minimax-rationalizability]
Given a symmetric game in normal form, a Hofstadter equilibrium is always minimax-rationalizable.
\end{theorem}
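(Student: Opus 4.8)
The plan is to show that the single strategy $\upsilon$ underlying the Hofstadter profile $\overrightarrow\sigma=(\upsilon,\ldots,\upsilon)$ survives every round of iterated minimax deletion. Since a profile is minimax-rationalizable precisely when all of its component strategies survive, and here every component is the same $\upsilon$, this is all that needs to be established. I would argue by contradiction, fixing the \emph{first} round at which $\upsilon$ would become minimax-dominated, say by a strategy $\upsilon'$ drawn from the set of strategies still surviving at that round. Throughout, let $h$ denote the Hofstadter payoff $u_i(\upsilon,\ldots,\upsilon)$, which by symmetry is independent of the player $i$.

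First I would record a lower bound on the right-hand side of the domination inequality. Because $\upsilon$ has not yet been eliminated when this round begins, and because symmetry together with the order-independence of minimax deletion (noted earlier in the paper) forces the surviving strategy set to be the same for every player, the homogeneous opponent profile $(\upsilon,\ldots,\upsilon)$ is still an admissible point of the reduced $\Sigma_{-i}$. Hence
\[
\max_{\tau_{-i}\in\Sigma_{-i}} u_i(\upsilon,\tau_{-i}) \;\ge\; u_i(\upsilon,\upsilon,\ldots,\upsilon) \;=\; h.
\]
Next I would bound the left-hand side from above. The dominating strategy $\upsilon'$ is, by hypothesis, surviving at this round, so again by symmetry the opponent profile $(\upsilon',\ldots,\upsilon')$ is admissible, giving
\[
\min_{\tau_{-i}\in\Sigma_{-i}} u_i(\upsilon',\tau_{-i}) \;\le\; u_i(\upsilon',\upsilon',\ldots,\upsilon').
\]
But $(\upsilon',\ldots,\upsilon')$ lies on the diagonal, and by the defining maximality of the Hofstadter equilibrium the diagonal payoff at $\upsilon'$ cannot exceed $h$, so $u_i(\upsilon',\ldots,\upsilon')\le h$. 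Chaining the two bounds yields $\min_{\tau_{-i}} u_i(\upsilon',\tau_{-i}) \le h \le \max_{\tau_{-i}} u_i(\upsilon,\tau_{-i})$, which directly contradicts the minimax-domination inequality $\min_{\tau_{-i}} u_i(\upsilon',\tau_{-i}) > \max_{\tau_{-i}} u_i(\upsilon,\tau_{-i})$. Therefore $\upsilon$ is never minimax-dominated, and the Hofstadter profile is minimax-rationalizable.

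The main obstacle I anticipate is the bookkeeping around iterated deletion: I must ensure that the two distinguished opponent profiles, all-$\upsilon$ and all-$\upsilon'$, genuinely remain in the reduced strategy space $\Sigma_{-i}$ at the round under consideration. This is exactly where symmetry is indispensable — it guarantees that a strategy surviving for one player survives for all, so that a homogeneous (diagonal) opponent profile built from any surviving strategy is always an admissible point over which the inner $\min$ and $\max$ range. The order-independence of minimax-rationalizability makes ``the surviving set'' well-defined and legitimizes speaking of a first round of elimination without ambiguity; without these two structural facts the argument would not close.
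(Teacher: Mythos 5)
Your proof is correct and takes essentially the same route as the paper's: the identical chain of inequalities $\min_{\tau_{-i}\in\Sigma_{-i}} u_i(\upsilon',\tau_{-i}) \le u_i(\upsilon',\ldots,\upsilon') \le u_i(\upsilon,\ldots,\upsilon) \le \max_{\tau_{-i}\in\Sigma_{-i}} u_i(\upsilon,\tau_{-i})$ obtained by feeding the two diagonal profiles into the $\min$ and $\max$, legitimized by the same symmetry-plus-order-independence reordering of eliminations into symmetric batches. The only difference is in the induction bookkeeping: your first-failure contradiction lets you apply the original diagonal maximality directly to the surviving $\upsilon'$, whereas the paper closes the recursion by separately noting that a Hofstadter equilibrium remains a Hofstadter equilibrium of the reduced game after each round of deletion.
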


\begin{proof}[Hofstadter equilibrium $\subset$ minimax-rationalizability]

In minimax rationalizability, the order of elimination is not relevant. Because of symmetry, if a strategy gets eliminated for a player, then it will be eliminated for all players. We reorder eliminations in such a way that strategies get eliminated for all players in batches so that after each elimination, the game remains symmetric. We can now show that, for a symmetric game, a Hofstadter equilibrium cannot get minimax-eliminated.

Let $\overrightarrow\sigma$ be a Hofstadter equilibrium. We can write it as $\overrightarrow\sigma=(\sigma, \sigma, ..., \sigma)$ for some $\sigma\in\Upsilon$.

By definition of the maximum:

$$\max_{\tau_{-i}\in\Sigma_{-i}} u_i(\sigma, \tau_{-i}) \ge u_i(\sigma, \sigma, ..., \sigma)$$

because $(\sigma, ..., \sigma)$ is in the set over which the maximum is taken (opponents' strategies).

Let now $i$ denote a player, and $\upsilon\in\Upsilon$ now be any of its strategies. By definition of a Hofstadter equilibrium, the payoffs are maximal on the diagonal, so that:

$$u_i(\sigma, \sigma, ..., \sigma) \ge u_i(\upsilon, \upsilon, ..., \upsilon)$$

Finally, by definition of the minimum:

$$u_i(\upsilon, \upsilon, ..., \upsilon) \ge \min_{\tau_{-i}\in\Sigma_{-i}} u_i(\upsilon, \tau_{-i}) $$

because $(\sigma, ..., \sigma)$ is in the set over which the maximum is taken (opponents' strategies). By transitivity, we get:

$$\max_{\tau_{-i}\in\Sigma_{-i}} u_i(\sigma, \tau_{-i}) \ge \min_{\tau_{-i}\in\Sigma_{-i}} u_i(\upsilon, \tau_{-i})$$

which directly contradicts the existence of a strategy that allows minimax-domination, and this holds for any player.

There is one more thing to say for the proof to be complete. After an iteration of the deletion of minimax-dominated strategies as described above, a Hofstadter equilibrium remains a Hofstadter equilibrium. This is because eliminating other rows or columns than that of the maximum diagonal payoff does not affect this maximum diagonal payoff. Hence, a Hofstadter equilibrium will recursively survive all iterations and, in the end, satisfy minimax rationalizability. $\square$
\end{proof}

\begin{theorem}[Hofstadter equilibrium $\subset$ individual rationality]
Given a symmetric game in normal form, a Hofstadter equilibrium is always individually rational.
\end{theorem}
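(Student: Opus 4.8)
The plan is to show, for each player $i$, that the Hofstadter payoff $u_i(\overrightarrow\sigma)$ dominates player $i$'s maximin value $m_i=\max_{\tau_i\in\Sigma_i}\min_{\tau_{-i}\in\Sigma_{-i}} u_i(\tau_i,\tau_{-i})$, which is exactly the defining condition of an individually rational strategy profile. The whole argument rests on a single observation: because the game is symmetric, every player shares the common strategy space $\Upsilon$, so any maximin-attaining strategy of player $i$ can itself be placed on the diagonal and thereby be compared against the Hofstadter equilibrium.

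First I would fix a player $i$ and choose a strategy $\tau^*\in\Upsilon$ attaining the maximin, so that $m_i=\min_{\tau_{-i}\in\Sigma_{-i}} u_i(\tau^*,\tau_{-i})$. Next I would consider the diagonal profile $(\tau^*,\dots,\tau^*)$: since its opponents' component $(\tau^*,\dots,\tau^*)\in\Sigma_{-i}$ is one particular element of the set over which the minimum defining $m_i$ is taken, we get immediately $u_i(\tau^*,\dots,\tau^*)\ge\min_{\tau_{-i}\in\Sigma_{-i}} u_i(\tau^*,\tau_{-i})=m_i$. Finally, because $\overrightarrow\sigma$ is a Hofstadter equilibrium it maximizes the payoff over the diagonal, so in particular $u_i(\overrightarrow\sigma)\ge u_i(\tau^*,\dots,\tau^*)$. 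Chaining the two inequalities yields $u_i(\overrightarrow\sigma)\ge m_i$, and since $i$ was arbitrary the Hofstadter equilibrium is individually rational.

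There is essentially no hard step here: once the right diagonal profile is selected, the proof is a two-line chain of inequalities. The only point requiring care is that the maximin strategy $\tau^*$ genuinely lives in the shared strategy space $\Upsilon$, which is precisely what symmetry guarantees and what makes $(\tau^*,\dots,\tau^*)$ a legitimate diagonal profile over which the Hofstadter maximization ranges. Without symmetry (distinct $\Sigma_i$) this move would be unavailable, consistent with the fact that the statement is asserted only for symmetric games.
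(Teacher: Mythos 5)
Your proof is correct and follows essentially the same route as the paper's: both arguments chain the inequality ``Hofstadter payoff dominates every diagonal payoff'' with ``any diagonal payoff $u_i(\tau,\dots,\tau)$ dominates the row minimum $\min_{\tau_{-i}} u_i(\tau,\tau_{-i})$,'' using symmetry ($\Upsilon=\Sigma_i$) to place strategies on the diagonal. The only cosmetic difference is that you instantiate a specific maximin-attaining strategy $\tau^*$ while the paper keeps the maximum over all $\upsilon\in\Upsilon$ symbolic and takes it at the end; these are the same argument.
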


\begin{proof}[Hofstadter equilibrium $\subset$ individual rationality]

Let $\overrightarrow\sigma$ be a Hofstadter equilibrium. We can write it as $\overrightarrow\sigma=(\sigma, \sigma, ..., \sigma)$ for some $\sigma\in\Upsilon$.
Let $i$ denote a player. By definition of a Hofstadter equilibrium, the payoffs are maximal on the diagonal, so that:

$$u_i(\sigma, \sigma, ..., \sigma) \ge \max_{\upsilon\in\Upsilon} u_i(\upsilon, \upsilon, ..., \upsilon)$$

Furthermore, for any strategy $\upsilon$, 

$$u_i(\upsilon, \upsilon, ..., \upsilon) \ge \min_{\tau_{-i}\in\Sigma_{-i}}  u_i(\upsilon, \tau_{-i})$$

(the minimum payoff on its line can only be smaller than the payoff on the diagonal). Combining the above inequalities:

$$u_i(\sigma, \sigma, ..., \sigma) \ge \max_{\upsilon\in\Upsilon} u_i(\upsilon, \upsilon, ..., \upsilon) \ge \max_{\upsilon\in\Upsilon} \min_{\tau_{-i}\in\Sigma_{-i}}  u_i(\upsilon, \tau_{-i})$$

Considering that $\upsilon$ is a mute variable and that $\Upsilon=\Sigma_i$ (symmetric game),

$$u_i(\sigma, \sigma, ..., \sigma) \ge  \max_{\tau_i\in\Sigma_i} \min_{\tau_{-i}\in\Sigma_{-i}}  u_i(\tau_i, \tau_{-i})$$

this fulfils the definition of an individually rational outcome. $\square$
\end{proof}

\section{Counterexamples}
\label{section-counterexamples}

We now come back to games that are potentially asymmetric. We already know that the PTE is always individually rational by definition, and we can also establish that some other inclusions are not true in general with a few counterexamples. The general inclusion diagram of the PTE, minimax-rationalizability and individual rationality is shown in Figure \ref{fig-venn2}.

\begin{figure}
\centering{
\resizebox{0.7\textwidth}{!}{
\includegraphics{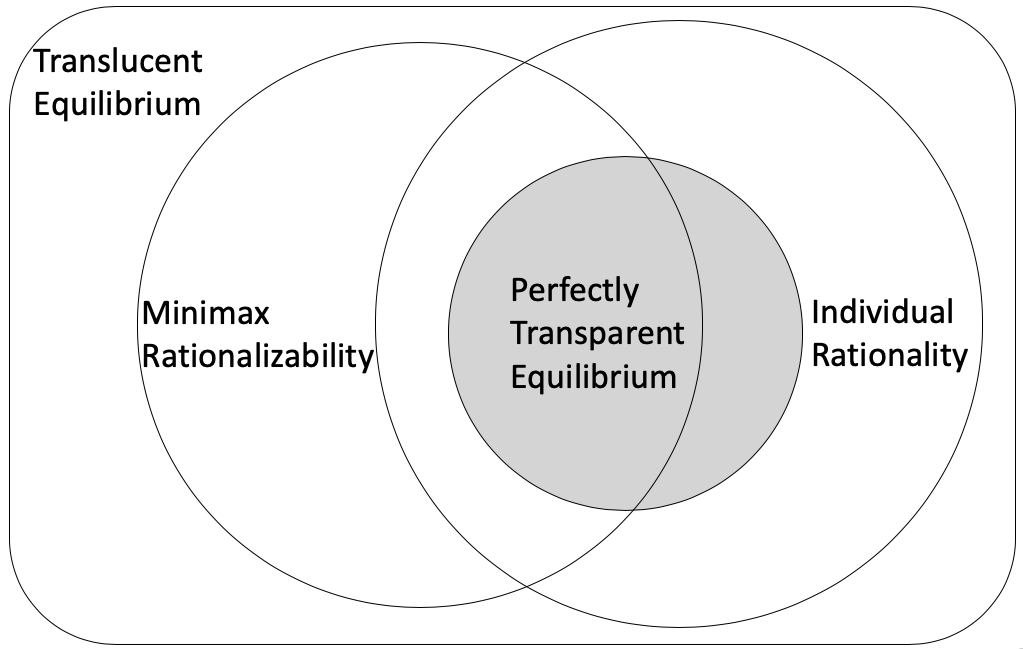}
}
}
\caption{A diagram depicting the relationship between the Translucent Equilibrium, Minimax Rationalizability, Individual Rationality and the Perfectly Transparent Equilibrium for strategy profiles on non-symmetric games: one of the inclusions does not hold in general.}
\label{fig-venn2}
\end{figure}

\subsection{Game with no PTE}

The PTE does not always exist, and the Chicken game provides a counter-example. Figure \ref{fig-example3} gives the detail of the reasoning on the chicken game. In the first round, the maximin utilities are (1,1). Strategy profile AC is both preempted by B and by D. In the second round, the new maximin utilities are (3,3). None of the remaining strategy profiles Pareto-dominates (3,3), so that none of them is stable: no matter which of the remaining outcomes would actually happen, at least one agent would not have acted rationally under the assumptions of Necessary Rationality and Necessary Knowledge of Strategies.

\begin{figure}
\begin{center}

\resizebox{\textwidth}{!}{
\begin{tabular}{lcccr}

\begin{tabular}{|r|c|c|}
\hline
& A & B \\
\hline
C & 0, 0& 3, 1\\
\hline
D & 1, 3& 2,2\\
\hline
\end{tabular}

&

$\Rightarrow$

&

\begin{tabular}{|r|c|c|}
\hline
& A & B \\
\hline
C & \cellcolor{black!25}0, 0& 3, 1\\
\hline
D & 1, 3& 2,2\\
\hline
\end{tabular}

&

$\Rightarrow$

&

\begin{tabular}{|r|c|c|}
\hline
& A & B \\
\hline
C & \cellcolor{black!25}0, 0& \cellcolor{black!25}3, 1\\
\hline
D & \cellcolor{black!25}1, 3& \cellcolor{black!25}2,2\\
\hline
\end{tabular}

\end{tabular}
}
\end{center}
\caption{Iterated elimination of preempted strategy profiles in the the chicken game. The tuples of maximin utilities are (1,1), then (3,3). No strategy profile remains: there is no PTE.}
\label{fig-example3}
\end{figure}

The chicken game thus has no PTE. Another example of a game with no PTE is the Battle of the Sexes game\footnote{we refer to its variant in general position.}. Even though there is no PTE for all games, our framework is not silent on these games: it does say that there exists no equilibrium under its underlying assumptions. This is an improvement over the original superrationality paradigm, which remains completely silent on symmetric games because it relies on the symmetry argument.

\subsection{Symmetric game with a Hofstadter equilibrium which is not a PTE}

The game shown in Figure \ref{fig-example3} shows a Hofstadter equilibrium (2, 2) that is not a PTE: it is not immune to Necessary Rationality and Necessary Knowledge of Strategies, because the row player would have rationally preferred C (getting $3>2$ knowing that CA is impossible) to B, and, likewise, the column player would have rationally preferred B to D knowing that AC is impossible. This proves that Necessary Rationality and Necessary Knowledge of Strategies is a slightly more stringent assumption than Superrationality.

\subsection{Game with a PTE that is not rationalizable}

Figure \ref{figure-asymmetric-social-dilemma} shows a counterexample in which the PTE is not rationalizable. Indeed, strategy F is a dominant strategy, i.e., it always the best response of the row player. However, the PTE is (D, A). This is due to rationalizability being specific to the Nash paradigm, in which deviations are unilateral only.

\subsection{Game with a PTE that is not a Nash equilibrium}

Figure \ref{figure-asymmetric-social-dilemma} shows a counterexample in which the PTE is not a Nash equilibrium. The prisoner's dilemma is also such a game. For some other games, the PTE is a Nash equilibrium -- for example, a game where both players always get the same payoffs and have aligned interests. The non-Nashian paradigm, in general, is thus in no particular inclusion relationship with Nash equilibria.

\subsection{Game with a PTE that is not minimax-rationalizable}

\label{section-counterexample-minimax}

While minimax-rationalizability is a non-Nashian concept, and while the PTE on symmetric games is always minimax-rationalizable as shown in Section \ref{section-symmetric-games}, the PTE is not always minimax-rationalizable in general, showing a singularity between translucency and full transparency.

\section{Motivation and practical use}

The two concepts of Necessary Rationality and Necessary Knowledge of Strategies can be seen as strong assumptions. In practice, however, these assumptions can be reduced to the agents' \emph{believing} that they are rational in all possible worlds and correctly predict each other in all possible worlds, as pointed out by \cite{Dupuy2000}. Whether or not these beliefs are actually correct is irrelevant, and the Perfectly Transparent Equilibrium is reached even with this weaker assumption: in the end, the decisions made by agents are driven by their own beliefs and rational reasoning under these beliefs.

Having this in mind, we see six relevant domains of applicability of the Perfectly Transparent Equilibrium:

\subsection{Philosophy of ethics}

First, in some situations including the prisoner's dilemma, the behavior predicted by the Perfectly Transparent Equilibrium, e.g., cooperating, is actually encountered in the real world for some agents. The belief in Necessary Rationality and Necessary Knowledge of Strategies can be seen as a formal model for \emph{describing} honest behavior: somebody who holds the sincere belief that they are an open book and that everybody else correctly anticipates all their actions is less likely to choose to betray other agents -- regardless of whether this belief is correct or not. As of today, the descriptive value of the PTE is limited to such very simple situations; its counterpart in extensive form, the Perfect Prediction Equilibrium, also has descriptive value for asynchronous exchange settings.

However, Necessary Rationality and Necessary Knowledge of Strategies also provide a \emph{prescriptive} moral framework that helps to define ethical behavior. Dupuy's work is largely inspired by Kantian philosophy, where acting based on this belief can be packaged as a categorical imperative to do so.

For games in extensive form \citep{Fourny2018}, this Kantian imperative seeks to avoid inconsistencies and seek a consistent timeline where the prediction of the solution of the game causes that solution to be reached, as a self-fulfilling prophecy. In Dupuy's words, ``never act in such a way that, had your action been anticipated, it would not be in your power to carry it out.``

We saw in Section \ref{section-social-welfare} that, on games with randomly picked permutations of preferences (in our example 4 agents, 3 strategies), the PTE leads on average to increased social welfare, which in turns advocates for a normative value of non-Nashian thinking.

\subsection{Bargaining}
Nashian agents that would, under the classical Nash reasoning, reach a Nash equilibrium that is sub-optimal, such as both betraying in the prisoner's dilemma (see also ``social dilemma' examples in Section \ref{section-social-dilemma}) often realize that they are stuck in an unsatisfactory equilibrium. They can choose to enter a bargain and commit contractually to different choices that are otherwise inaccessible to Nashian agents. The Perfectly Transparent Equilibrium provides a natural basis for such a bargain: acknowledging that the shared belief in Necessary Rationality and Necessary Knowledge of Strategies always leads to Pareto-optimal outcomes, the agents can agree to commit to act ``as if'' they believed so, and, draft the choice of strategies into a binding contract accordingly. This is a pragmatic and practical way to find and agree on a Pareto-optimal outcome, and also to increase in the long-term social welfare.

\subsection{Predicting human decisions with Machine Learning}
\label{section-ml-usecase}

A more long-term motivation for the Perfectly Transparent Equilibrium, and the non-Nashian approach in general, is that Machine Learning techniques are constantly getting better at predicting human behavior. An example thereof is the work by \cite{Kadar2015}, which predicts in advance where burglaries are most likely to happen so that the police can optimize their patrols.

Predicting human behavior, however, is different from predicting tomorrow's weather. Indeed, when the agents that are being predicted are informed of the prediction (valid in one possible world) in advance, they can adapt their behavior and make this prediction false. This is addressed in the context of CDT by behavioral game theory \citep{Allais1953}. An approach constrained by CDT, however, has fundamental limitations (this is also true for EDT), which are discussed at a considerable level of depth by \cite{Dupuy1992}. Dupuy seeded the counterfactual approach by observing that CDT is based on predictions being true in the actual world and yet counterfactually falsifiable by making a different decision, while our direction of research assumes they are necessarily true, because counterfactually dependent on the decision.

Any system that claims to predict the decisions made by human beings must thus consider its own impact on the agents it predicts, i.e., the fact that people know in advance what is being predicted that they will do. Thus, perfect prediction can only be achieved by considering this as a fixpoint problem, taking the implications of its own anticipation into account.

As of today, our machine learning technologies are not precise enough for people to be convinced that they are being perfectly predicted. The descriptive value of the PTE is thus limited today, and we will stick, for the purpose of this paper, to a few conjectures that may or may not prove correct.

We expect that, in the coming decades, as AI systems become more powerful and accurate, in some contexts, people will begin factoring in and completely trusting that, no matter what they decide to do, the AI system will have predicted it. We thus conjecture that the descriptive value of the PTE will increase and account for a growing number of decision-making situations.

If this conjecture is realized, we suspect that there may be a transition period in which the level of accuracy of such systems will not be high enough for reasoning in terms of perfect prediction, but still high enough for deviations not to occur unilaterally. In other words, we may enter a gray zone in which the Nash paradigm slowly loses its descriptive value, but in which the perfect prediction paradigm does not yet have a descriptive value; for this, the translucency paradigm introduced by \cite{Halpern:2013aa} might provide a better account, but at the cost of equilibrium multiplicity.

\subsection{Ethics of AI and robots}

Leaving the domain of validity of the Nash approach to game theory may happen at a faster pace if people start delegating some of their decisions to AI recommendation systems \citep{Harari2015} as they realize that these systems know them better than themselves. Indeed, AI recommendation systems (e.g., movies on Netflix, books on Amazon, restaurants, dating websites, etc.) are purely based on algorithms and code. If Artificial Intelligence becomes a proxy for human decisions, the mutual strategic interactions shift from humans to machines, and optimizing decisions in this environment becomes a purely programmatic and algorithmic problem.

In a fully transparent setup in which the machines know one another's code and algorithms, the non-Nashian concepts of Necessary Rationality and Necessary Knowledge of Strategies become more adequate models than Nashian rationality and unilateral deviations.

Furthermore, due to the advantageous economic properties of the Perfectly Transparent Equilibrium as well as its extensive-form counterpart \citep{Fourny2018}, namely, their Pareto-optimality, having programming machines (e.g., smart contracts) that interact with each other transparently under Necessary Rationality and Necessary Knowledge of Strategies can lead to desirable, Pareto-optimal outcomes in general, and to Pareto-improvements over Nash equilibria in some (but not all) settings. This provides an additional incentive to manage the behavior of Artificial Intelligence systems interacting with each other so as to emulate what could be interpreted as honest behavior, with ramifications in the ethics of AI.

An interesting avenue of research is the study of the interaction of non-Nashian agents with Nashian agents, as the latter could, for example, hide their code. A Nashian agent has an incentive to exploit the algorithmic behavior of a non-Nashian agent. This in turns means that a non-Nashian algorithm should only behave as such when interacting with another non-Nashian agent with the same algorithm, and otherwise adapt, for example, by mimicking Nashian behavior. Related work on this meta-level of thinking includes extended preferences \citep{Greaves2018}, in which agents can have ``preferences on their preferences'', although the case of what one could call ``preferences over beliefs'', ``preferences over decision theories'' or the dynamic switch between decision theories to adapt to one's opponent is likely to be more complex.

There are also numerous discussions in this respect related to Newcomb's problem: a two-boxer agent who is convinced that he was rational getting \$1,000 from his two boxes might nevertheless feel uneasy observing one-boxers effortlessly getting \$1,000,000, and wish he had been a one-boxer as well, without it being a formal contradiction in an extended framework.

\subsection{Modeling of an absent or impaired theory of mind}

The belief of Necessary Rationality and Necessary Knowledge of Strategies can be directly put into perspective with the absence or impairment of a theory of mind \citep{Premack1978} for an agent. ``Theory of mind'' refers to the ability to distinguish between one's knowledge and the knowledge of other agents.

Even though at first sight many would argue that they are more stringent assumptions than the simpler, Nashian assumptions of Common Knowledge of Rationality and Common Knowledge of Strategies, it could as well be argued, on the contrary, that these assumptions are, in fact, \emph{simpler assumptions} than in the Nash equilibrium reasoning. Indeed, agents in the Nash paradigm must keep track of who knows what, who knows who knows what, etc., which requires a significant amount of resources in terms of computation and memory from the brain. Higher levels of reasoning are even inaccessible to human agents in practice.

Necessary Rationality and Necessary Knowledge of Strategies flattens the structure of knowledge by idealizing agents as being epistemically omniscient. This also leads to less computational complexity in our algorithms, which is even more visible in the extensive form variant of the paradigm, which is based on a forward induction \citep{Fourny2018}. Children with a not-yet-developed or impaired theory of mind, for example, do not lie well or at all, simply put, because they believe that everybody else knows what they know \citep{Ding2015} \citep{Evans2013}  \citep{Baron1985}. The belief of Necessary Rationality and Necessary Knowledge of Strategies can thus be used as a descriptive model for such agents.

There is an interesting argument made by Rich \cite{Shiffrin2009}, who uses a thought experiment in which an agent is playing against themselves, but are taking Midalozam, a drug that erases short-term memory, between making their decisions. This can also be used in games in normal form, by having the agent make the row and column decisions in turn, knowing that they are making both decisions, although not jointly. The decisions made in such a setup still remain strategic decisions, making this process non-trivial.

\subsection{Deterministic extension of quantum theory}

Another promising use case for the line of research presented in this paper is the design of a model to extend quantum theory to a deterministic quantum theory, as suggested by \cite{Einstein1935}. A concrete model modeling quantum experiments, such as the EPR, as a game played between humans and the universe that can be solved for its Perfectly Transparent Equilibrium is given by \cite{Fourny2019b}.

As it turns out, the assumption that blocks deterministic extensions of quantum theory in impossibility theorems \citep{Renner2011} is the exact same assumption made in the Nash paradigm, i.e., that decisions made by physicists on what to measure are made independently. It is the very assumption that we as well as others (Halpern, Pass, Shiffrin, Hofstadter, Shiffrin...) are challenging, leading to an alternative paradigm we call Non-Nashian Decision Theory. More on this is said in Section \ref{section-quantum-games}.

\subsection{Descriptive or normative value of the PTE}

Is Non-Nashian Decision Theory, and more specifically the PTE, descriptive or normative? The PTE has some normative value in the sense that it can be used as the basis for a social norm for the decision making of individual agents that share this norm. The agents selfishly maximize their utility under this social norm, namely, that they all believe they are perfectly predictable. Then, an ``invisible hand'' similar to Adam Smith's view leads to the PTE, when it exists, and to its benefits in terms of Pareto-optimality (Section \ref{section-pareto-optimal}) and social welfare (Section \ref{section-social-welfare}).

This alternative form of decision-making provides an alternative to Nashian thinking but does not claim any form of monopoly over rationality. In \cite{Dupuy2007}'s words regarding the philosophical and ethical foundations of this thinking: ``The ruses based on the capacity to surprise are essential to human and social interactions, especially in the political arena. My claim is more modest. It does not consist in rejecting the kind of rationality proper to occurring time \footnote{The notion of time specific to Causal Decision Theory, but also to Evidential Decision Theory.} as embodied in orthodox Decision Theory\footnote{CDT.}. It insists that there exists another form of practical reason, no less important, that goes along with a different temporality and is associated with another kind of decision making.''

Today, the PTE has some descriptive value for very simple games and for some agents; indeed, there are many situations in the real world when people collaborate or act altruistically or honestly. The PTE, and NNDT in general, provide a framework that can describe this form of decision making and of its underlying social norm with alternative reasoning based on utility maximization. We emphasize, however, that as of today this descriptive value is limited and specific to certain agents and situations, while we also suggest (Section \ref{section-ml-usecase}) that this descriptive value might increase in the future. This suggestion is based, among others, on impossibility theorems in quantum physics that demonstrate, assuming the correctness of quantum theory's predictions, that Nature is contextual\footnote{This means, in simple terms, that what we see would have been different if our choice of the \emph{other} things that we decide to look at had been different. This property of Nature is fundamentally non-Nashian.} in essence \citep{Kochen1967}. This entails that CDT is an approximation of reality (a very good one for today's purposes) and that our ability to better describe Nature can only come at the cost of weakening the concept of independent decision making, which is what NNDT does.

\section{Conclusion}
\label{section-conclusion}

We introduced a new equilibrium for games in normal form in general positions reached under Necessary Rationality and Necessary Knowledge of Strategies: the Perfectly Transparent Equilibrium. We also emphasized the underlying counterfactual reasoning as being a decision theory, common to several non-Nashian results (Hofstadter, Dupuy, Halpern, us) that is neither CDT nor EDT: Non-Nashian Decision Theory.

In the case of symmetric games, we established inclusion relationships between the Hofstadter equilibrium, the Perfectly Transparent Equilibrium, minimax-rationalizability, and individual rationality. As mentioned in \citep{Fourny2018}, we suspected that the PPE was some kind of counterpart of the superrational thinking on games in extensive forms, and the PTE that we have just defined acts here as a missing link between the two.

The non-Nashian assumptions behind the PTE correspond to one-boxer behavior in Newcomb's Paradox, as \citet{Dupuy1992} showed. These assumptions describe an alternative form of rationality that explains the behavior of some agents that do not follow Nashian predictions. This form of rationality is based on the belief that decisions are correctly predicted in all possible worlds, and that agents are rational in all possible worlds.

\section{Acknowledgements}

I am first and foremost indebted to Jean-Pierre-Dupuy, who spent decades designing the decision theory framework of projected time and perfect prediction and formulated the initial conjectures, as well as to St\'ephane Reiche, with whom we collaborated on the formalism of the PPE in extensive form. Some counterexamples in this paper were found by Felipe Sulser during his Master's thesis, focused on a bootstrap of a cartography of non-Nashian game theory based on the analysis of game datasets. Gustavo Alonso co-supervised this thesis and also supported me with the costs of running the large-scale experiments on a data center infrastructure. I am also thankful to Alexei Grinbaum, Bernard Walliser, Rich Shiffrin, Bob French, Joe Halpern for exciting discussions on the topic, as well to Elliott Ash and Li Jialin for exciting discussions as well as proof-reading the paper.

\bibliographystyle{spbasic}      

\end{document}